\tikzset{cross/.style={cross out, draw=black, minimum size=2*(#1-\pgflinewidth), inner sep=0pt, outer sep=0pt},
	cross/.default={5pt}}
\DeclareMathOperator{\Tr}{Tr}
\newcommand{\overbar}[1]{\mkern 1.5mu\overline{\mkern-1.5mu#1\mkern-1.5mu}\mkern 1.5mu}
\theoremstyle{plain}
\newtheorem{theorem}{Theorem}
\newtheorem{prop}{Proposition}
\theoremstyle{plain}
\newtheorem{clr}{Corollary}
\theoremstyle{plain}
\newtheorem{lemma}{Lemma}
\def\@fpheader{\relax}
\title{Nonresonant renormalization scheme for twist-$2$ operators in $\mathcal{N}=1$  SUSY  SU($N$) Yang-Mills theory}
\author[b,a]{Francesco Scardino}
\affiliation[a]{Physics Department, INFN Roma1, 
	Piazzale A. Moro 2, Roma, I-00185, Italy}
\affiliation[b]{Physics Department, Sapienza University,Piazzale A. Moro 2, Roma, I-00185, Italy}
\emailAdd{francesco.scardino@uniroma1.it}
\abstract{The short-distance asymptotics of the generating functional for $n$-point correlators of twist-$2$ operators in $\mathcal{N}=1$  supersymmetric (SUSY) SU($N$) Yang-Mills (SYM) theory were recently calculated in \cite{BPS41,BPS42}. This calculation depends on a change of basis for renormalized twist-$2$ operators, in which $-\gamma(g)/ \beta(g)$ reduces to $\gamma_0/ (\beta_0\,g)$ at all orders in perturbation theory, where $\gamma_0$ is diagonal, $\gamma(g) = \gamma_0 g^2+\ldots$ is the anomalous-dimension matrix, and $\beta(g) = -\beta_0 g^3+\ldots$ is the beta function. The method is founded on a new geometric interpretation of operator mixing \cite{Bochicchio:2021geometry}, assuming that the eigenvalues of the matrix $\gamma_0/ \beta_0$ meet the nonresonant condition $\lambda_i-\lambda_j\neq 2k$, with the eigenvalues $\lambda_i$ ordered nonincreasingly and $k\in \mathbb{N}^+$. This nonresonant condition was numerically verified for $i,j$ up to $10^4$ in \cite{BPS41,BPS42}. In this work, we employ techniques initially developed in \cite{S1} to present a number-theoretic proof of the nonresonant condition for twist-$2$ operators, fundamentally based on the classic result that Harmonic numbers are not integers.}
\begin{document} 
	
	\definecolor{c969696}{RGB}{150,150,150}
	\maketitle	
	\flushbottom

	\section{Introduction}
	Recently, the ultraviolet (UV) asymptotics of the generating functional of correlators of twist-$2$ operators in SU($N$) SYM theory was explicitly computed for the first time \cite{BPS41,BPS42}. This result establishes strong UV constraints on the anticipated nonperturbative solution of large-$N$ SYM theory and could serve as an essential guide in the search for this solution \cite{BPS41,BPS42}.\par
	Moreover, the aforementioned computation has also led to a refinement of the 't Hooft topological expansion in large-$N$ SU($N$) pure YM theory \cite{BPSL} that is intimately connected with the corresponding nonperturbative effective theory of glueballs \cite{BPSpaper2,BPSL}.\par
	A critical tool for performing this calculation is a change of basis of renormalized twist-$2$ operators, in which the renormalized mixing matrix $Z(\lambda)$ defined in Eq. \eqref{ZZ} becomes diagonal and one-loop exact to all orders of perturbation theory. This is achieved through a new geometric interpretation of operator mixing \cite{Bochicchio:2021geometry}, which we outline below.\par
	As noted in the introduction of \cite{Becchetti:2021for}, a change of renormalization scheme can, in general, involve both a reparametrization of the coupling—which alters the beta function $\beta(g) = -\beta_0 g^3+\ldots$, with $g\equiv g(\mu)$ the renormalized coupling—and a change in the basis of the operators that mix under renormalization, which modifies the anomalous dimension matrix $\gamma(g)=\gamma_0 g^2+ \cdots$. \par
	In this paper, we focus exclusively on a change of the operator basis \cite{Bochicchio:2021geometry}, while holding the renormalization scheme for $\beta(g)$ fixed, for instance, in the $\overbar{\text{MS}}$ scheme.\par
	Naturally, this change of basis also influences the ratio $-\frac{\gamma(g)}{\beta(g)}$, with $\beta(g)$ fixed, in a manner that we will detail shortly.\par
	In the case of operator mixing, the renormalized Euclidean correlators
	\begin{equation}\label{key}
		\langle \mathcal{O}_{k_1}(x_1)\ldots \mathcal{O}_{k_n}(x_n) \rangle = G^{(n)}_{k_1 \ldots k_n}( x_1,\ldots,  x_n; \mu, g(\mu))
	\end{equation}
	satisfy the Callan-Symanzik equation
	\begin{align}\label{CallanSymanzik}
		& \Big(\sum_{\alpha = 1}^n x_\alpha \cdot \frac{\partial}{\partial x_\alpha} + \beta(g)\frac{\partial}{\partial g}+ \sum_{\alpha = 1}^n D_{\mathcal{O}_\alpha}\Big)G^{(n)}_{k_1 \ldots k_n} + \nonumber\\
		&+ \sum_a \Big(\gamma_{k_1a}(g) G^{(n)}_{ak_2 \ldots k_n}+ \gamma_{k_2a}(g) G^{(n)}_{k_1 a k_3 \ldots k_n} \cdots +\gamma_{k_n a}(g) G^{(n)}_{k_1 \ldots a}\Big) = 0\,,
	\end{align}
	with the solution
	\begin{align}\label{csformula}
		&G^{(n)}_{k_1 \ldots k_n}(\lambda x_1,\ldots, \lambda x_n; \mu, g(\mu)) \nonumber\\
		&= \sum_{j_1 \ldots j_n} Z_{k_1 j_1} (\lambda)\ldots Z_{k_n j_n}(\lambda)\hspace{0.1cm} \lambda^{-\sum_{i=1}^nD_{\mathcal{O}_{j_i}}} G^{(n)}_{j_1 \ldots j_n }( x_1, \ldots, x_n; \mu, g(\frac{\mu}{\lambda}))\,,
	\end{align}
	where $D_{\mathcal{O}_i}$ is the canonical dimension of $\mathcal{O}_i(x)$, and
	\begin{equation}\label{eqZ}
		\Bigg(\frac{\partial}{\partial g} + \frac{\gamma(g)}{\beta(g)}\Bigg)Z(\lambda) = 0
	\end{equation}
	in matrix notation, and
	\begin{equation} \label{ZZ}
		Z(\lambda) = P\exp \Big(\int_{g(\mu)}^{g(\frac{\mu}{\lambda})}\frac{\gamma(g')}{\beta(g')} dg'\Big)\,.
	\end{equation}
	The question arises whether a basis of renormalized operators exists where $Z(\lambda)$ becomes diagonal, so that Eq. \eqref{csformula} is greatly simplified, reducing to a single term.\par
	In essence, to address this question, we interpret \cite{Bochicchio:2021geometry} a finite change of renormalization scheme—that is, a change in the basis of renormalized operators expressed in matrix notation
	\begin{equation}\label{linearcomb}
		\mathcal{O}'(x) = S(g) \mathcal{O}(x)
	\end{equation}
	as a formal real-analytic invertible gauge transformation $S(g)$ \footnote{Obviously, in this context the gauge transformation $S(g)$ only depends on the coupling $g$ and it has nothing to do with the spacetime gauge group of the theory.} \cite{Bochicchio:2021geometry}. Under the action of $S(g)$, the matrix
	\begin{equation}
		A(g) = -\frac{\gamma(g)}{\beta(g)} = \frac{1}{g} \Big(\frac{\gamma_0}{\beta_0} + \cdots\Big)
	\end{equation}
	associated with the differential equation for $Z(\lambda)$
	\begin{equation}
		\Big(\frac{\partial}{\partial g} - A(g)\Big) Z(\lambda) = 0
	\end{equation}
	can be seen as a connection $A(g)$
	\begin{eqnarray} \label{sys2}
		A(g)= \frac{1}{g} \left(A_0 + \sum^{\infty}_ {n=1} A_{2n} g^{2n} \right)\,,
	\end{eqnarray}
	with a regular singularity at $g = 0$ that transforms as
	\begin{equation}
		A'(g) = S(g)A(g)S^{-1}(g) + \frac{\partial S(g)}{\partial g} S^{-1}(g)\,,
	\end{equation}
	with
	\begin{equation}
		\mathcal{D} = \frac{\partial }{\partial g} - A(g)
	\end{equation}
	as the corresponding covariant derivative.
	As a result, $Z(\lambda)$ can be interpreted as a Wilson line that transforms as
	\begin{equation}
		Z'(\lambda) = S(g(\mu))Z(\lambda)S^{-1}(g(\frac{\mu}{\lambda}))\,.
	\end{equation}
	\begin{theorem}\label{th:1}  \cite{Bochicchio:2021geometry}
		If the matrix $\frac{\gamma_0}{\beta_0}$ is diagonalizable and nonresonant, i.e., its eigenvalues, ordered nonincreasingly, $\lambda_1,\lambda_2,\ldots$ satisfy
		\begin{equation}
			\label{eq:nonresonant}
			\lambda_i-\lambda_j\neq 2k\ , \qquad i>j\ , \qquad k\in\mathbb{N}^+\,,
		\end{equation}
		then a formal holomorphic gauge transformation $S(g)$ exists that puts $A(g)$ into the canonical nonresonant form 
		\begin{equation} \label{1loop}
			A'(g) = \frac{\gamma_0}{\beta_0}\frac{1}{g}
		\end{equation}
		which is one-loop exact to all orders of perturbation theory. 
		Consequently, $Z(\lambda)$ is diagonalizable as well, with eigenvalues
		\begin{equation}\label{eq:oneloop}
			Z_{\mathcal{O}_i}(\lambda) = \Bigg(\frac{g(\mu)}{g(\frac{\mu}{\lambda})}\Bigg)^{\frac{\gamma_{0\mathcal{O}_i}}{\beta_0}} \,,
		\end{equation}
		where $\gamma_{0\mathcal{O}_i}$ are the eigenvalues of $\gamma_0$.
	\end{theorem}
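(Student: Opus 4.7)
My plan is to construct the gauge transformation $S(g)$ as a formal power series in $g$, perturbatively removing the higher-order corrections of $A(g)$ beyond its $g^{-1}$ pole, and to exhibit the nonresonant condition \eqref{eq:nonresonant} as exactly the obstruction one must avoid at every order. As a preliminary, since $\gamma_0/\beta_0$ is diagonalizable, I would apply a constant gauge transformation to bring $A_0\equiv\gamma_0/\beta_0$ into diagonal form, $A_0=\mathrm{diag}(\lambda_1,\lambda_2,\ldots)$. The task then reduces to finding a formal $S(g)=I+\sum_{n\geq 1}S_n g^n$ such that $A'(g)=S(g)A(g)S^{-1}(g)+\partial_g S(g)\,S^{-1}(g)=A_0/g$.

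Equivalently, I would rewrite this requirement as $\partial_g S(g)=\frac{1}{g}[A_0,S(g)]-S(g)\sum_{k\geq 1}A_{2k}\,g^{2k-1}$ and match coefficients of $g^{n-1}$ on both sides. This yields a recursion of the form $(n\,\mathrm{id}-\mathrm{ad}_{A_0})\,S_n=R_n$, where $R_n$ depends only on the $A_{2k}$ with $2k\leq n$ and on the $S_m$ with $m<n$. Because $A(g)$ contains only even powers of $g$ outside the pole, the hierarchy decouples into independent even and odd sectors, and one may consistently set $S_m=0$ for odd $m$, reducing the ansatz to $S(g)=I+\sum_{n\geq 1}S_{2n}g^{2n}$.

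The central linear-algebra step is then to invert $2n\,\mathrm{id}-\mathrm{ad}_{A_0}$ at each order $n\in\mathbb{N}^+$. Since $\mathrm{ad}_{A_0}$ is diagonal in the basis of matrix units with eigenvalue $\lambda_i-\lambda_j$ on $E_{ij}$, this operator is invertible if and only if $\lambda_i-\lambda_j\neq 2n$ for every pair $(i,j)$ and every $n\in\mathbb{N}^+$ --- precisely the nonresonant condition \eqref{eq:nonresonant}. Under this hypothesis each $S_{2n}$ is uniquely determined (modulo the kernel of $\mathrm{ad}_{A_0}$, which affects only the diagonal and can be fixed by requiring $S(g)$ to be diagonal-free), and since the construction is purely formal no convergence question arises. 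Once $A'(g)=A_0/g$ in the new basis, Eq. \eqref{eqZ} becomes a diagonal first-order ODE and the path-ordered exponential \eqref{ZZ} collapses to ordinary exponentials; integrating each diagonal component between $g(\mu)$ and $g(\mu/\lambda)$ directly produces the eigenvalues \eqref{eq:oneloop}.

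The main difficulty I anticipate is not conceptual but is the careful bookkeeping: confirming that the parity decoupling is truly consistent at every order of the recursion, and that the sign of the eigenvalues of $2n\,\mathrm{id}-\mathrm{ad}_{A_0}$ matches the nonresonant inequality under the stated nonincreasing ordering of the $\lambda_i$, so that the hypothesis of the theorem coincides exactly with what the inductive construction requires.
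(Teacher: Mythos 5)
The paper does not actually prove Theorem~\ref{th:1}: it is quoted from \cite{Bochicchio:2021geometry}, and the present paper only establishes that the nonresonant \emph{hypothesis} holds for the twist-$2$ operators. Your blind reconstruction via a Poincar\'e--Dulac normal-form recursion is exactly the argument underlying the cited result, and the core steps are right: diagonalize $A_0$ by a constant conjugation, set up $\partial_g S = \tfrac1g[A_0,S]-S\sum_{k\ge1}A_{2k}g^{2k-1}$, match powers to obtain $(n\,\mathrm{id}-\mathrm{ad}_{A_0})S_n=R_n$, observe that the even structure of $A(g)$ forces $R_n=0$ for odd $n$ so that one may consistently take $S_n=0$ there, and invert at even order using nonresonance.

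There is, however, a genuine gap that you flag at the end but do not close. Your recursion lives on the full matrix algebra, where $\mathrm{ad}_{A_0}$ has eigenvalue $\lambda_i-\lambda_j$ on \emph{every} ordered pair $(i,j)$. Solvability of $(2n\,\mathrm{id}-\mathrm{ad}_{A_0})S_{2n}=R_{2n}$ would therefore require the \emph{two-sided} condition $\lambda_i-\lambda_j\neq 2k$ for all $i\neq j$, whereas the theorem only hypothesizes it for one sign of $i-j$. The missing ingredient is the (lower) triangular structure of the mixing: the $A_{2k}$ are triangular in the spin basis (Eq.~\eqref{m} and the surrounding discussion), so one looks for $S(g)=I+\text{(strictly triangular)}$, each $R_n$ is then triangular, and $\mathrm{ad}_{A_0}$ restricted to triangular matrices has eigenvalues $\lambda_i-\lambda_j$ with a \emph{fixed} sign of $i-j$, which is precisely what Eq.~\eqref{eq:nonresonant} (or its reordered form Eq.~\eqref{eq:nonresonant2}) controls. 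Without this restriction your proof assumes more than the theorem states. A minor additional slip: under nonresonance the operator $2n\,\mathrm{id}-\mathrm{ad}_{A_0}$ is invertible, so each $S_{2n}$ is \emph{uniquely} determined; there is no residual ``kernel of $\mathrm{ad}_{A_0}$'' freedom to fix. On the diagonal $\mathrm{ad}_{A_0}$ vanishes but $2n\,\mathrm{id}$ does not (since $n\ge 1$), so the diagonal of $S_{2n}$ is forced to be $R_{2n}^{\rm diag}/2n$ and cannot simply be set to zero.
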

	In this paper, we demonstrate for twist-$2$ operators in SU($N$) SYM theory that the matrix $\frac{\gamma_0}{\beta_0}$, which is already known to be diagonal \cite{Belitsky:1998gc,Braun:2003rp}, fulfills the aforementioned nonresonant condition, thereby proving the existence of the corresponding diagonal nonresonant renormalization scheme.  
	\section{Plan of the paper}
	Section \ref{sec:s1} defines the twist-$2$ operators in SU($N$) SYM theory and presents their one-loop anomalous dimensions \cite{Belitsky:2004sc}.\par
	Section \ref{sec:s2} reviews key number-theoretic concepts, such as the $p$-adic order and the classical proof demonstrating that the Harmonic numbers $H_n$ are not integers.  \par
	Section \ref{sec:s3} provides the proof of the nonresonant condition for all twist-$2$ operators in SU($N$) SYM theory.
	
	\section{Anomalous dimensions of twist-$2$ operators in SYM theory}\label{sec:s1}
	\subsection{Twist-$2$ operators in SYM theory}
	In the standard basis \cite{BPS1,Belitsky:2004sc, Belitsky:2003sh}, the gauge-invariant collinear twist-$2$ operators in the light-cone gauge that respectively appear as components of the balanced and unbalanced superfields \cite{SS1}\footnote{We refer to composite superfields made of two elementary superfields of opposite chirality as balanced, and to
	those made of two elementary superfields with the same chirality as unbalanced.}, are given by
	\begin{equation}\label{1000}
		\resizebox{0.57\textwidth}{!}{%
			$\begin{aligned}
				O^A_s &= \frac{1}{2} \partial_+ \bar{A}^a(i\overrightarrow{\partial}_++ i\overleftarrow{\partial}_+)^{s-2}C^{\frac{5}{2}}_{s-2}\Bigg(\frac{\overrightarrow{\partial}_+- \overleftarrow{\partial}_+}{\overrightarrow{\partial_+}+\overleftarrow{\partial}_+}\Bigg)\partial_+ A^a \\
				\tilde{O}^A_s &= \frac{1}{2} \partial_+ \bar{A}^a(i\overrightarrow{\partial}_++ i\overleftarrow{\partial}_+)^{s-2}C^{\frac{5}{2}}_{s-2}\Bigg(\frac{\overrightarrow{\partial}_+- \overleftarrow{\partial}_+}{\overrightarrow{\partial_+}+\overleftarrow{\partial}_+}\Bigg)\partial_+ A^a\\
				O^\lambda_s &=  \frac{1}{2} \bar{\lambda}^a(i\overrightarrow{\partial}_++ i\overleftarrow{\partial}_+)^{s-1}C^{\frac{3}{2}}_{s-1}\Bigg(\frac{\overrightarrow{\partial}_+- \overleftarrow{\partial}_+}{\overrightarrow{\partial_+}+\overleftarrow{\partial}_+}\Bigg) \lambda^a\\
				\tilde{O}^\lambda_s &=  \frac{1}{2} \bar{\lambda}^a(i\overrightarrow{\partial}_++ i\overleftarrow{\partial}_+)^{s-1}C^{\frac{3}{2}}_{s-1}\Bigg(\frac{\overrightarrow{\partial}_+- \overleftarrow{\partial}_+}{\overrightarrow{\partial_+}+\overleftarrow{\partial}_+}\Bigg) \lambda^a\\
				M_s &= \frac{1}{2}\hspace{0.08cm} \partial_+A^a (i\overrightarrow{\partial}_++ i\overleftarrow{\partial}_+)^{s-1}P^{(2,1)}_{s-1}\Bigg(\frac{\overrightarrow{\partial}_+- \overleftarrow{\partial}_+}{\overrightarrow{\partial_+}+\overleftarrow{\partial}_+}\Bigg) \lambda^a \\
				\bar{M}_s &=  \hspace{0.08cm}\frac{1}{2}\bar{\lambda}^a(i\overrightarrow{\partial}_++ i\overleftarrow{\partial}_+)^{s-1}P^{(1,2)}_{s-1}\Bigg(\frac{\overrightarrow{\partial}_+- \overleftarrow{\partial}_+}{\overrightarrow{\partial_+}+\overleftarrow{\partial}_+}\Bigg) \partial_+\bar{A}^a\, ,
			\end{aligned}$
		} 
	\end{equation}
	where $O^A_s$ and $O^\lambda_s$ are even spin operators while $\tilde{O}^A_s$ and $\tilde{O}^\lambda_s$ are odd spin operators. For the unbalanced operators
	\begin{equation}\label{1001}
		\resizebox{0.57\textwidth}{!}{%
			$\begin{aligned}
				S^A_s &= \frac{1}{2\sqrt{2}} \partial_+ \bar{A}^a(i\overrightarrow{\partial}_++ i\overleftarrow{\partial}_+)^{s-2}C^{\frac{5}{2}}_{s-2}\Bigg(\frac{\overrightarrow{\partial}_+- \overleftarrow{\partial}_+}{\overrightarrow{\partial_+}+\overleftarrow{\partial}_+}\Bigg)\partial_+ \bar{A}^a \\
				\bar{S}^A_s &= \frac{1}{2\sqrt{2}} \partial_+ A^a(i\overrightarrow{\partial}_++ i\overleftarrow{\partial}_+)^{s-2}C^{\frac{5}{2}}_{s-2}\Bigg(\frac{\overrightarrow{\partial}_+- \overleftarrow{\partial}_+}{\overrightarrow{\partial_+}+\overleftarrow{\partial}_+}\Bigg)\partial_+ A^a\\
				S^\lambda_s &=  \frac{1}{2\sqrt{2}} \bar{\lambda}^a(i\overrightarrow{\partial}_+ + i\overleftarrow{\partial}_+)^{s-1}C^{\frac{3}{2}}_{s-1}\Bigg(\frac{\overrightarrow{\partial}_+ - \overleftarrow{\partial}_+}{\overrightarrow{\partial}_++\overleftarrow{\partial}_+}\Bigg) \bar{\lambda}^a \\
				\bar{S}^\lambda_s &=  \frac{1}{2\sqrt{2}} \lambda^a(i\overrightarrow{\partial}_+ + i\overleftarrow{\partial}_+)^{s-1}C^{\frac{3}{2}}_{s-1}\Bigg(\frac{\overrightarrow{\partial}_+ - \overleftarrow{\partial}_+}{\overrightarrow{\partial}_++\overleftarrow{\partial}_+}\Bigg) \lambda^a \\
				T_s& =  \frac{1}{2}\lambda^a(i\overrightarrow{\partial}_+ + i\overleftarrow{\partial}_+)^{s-1}P^{(1,2)}_{s-1}\Bigg(\frac{\overrightarrow{\partial}_+ - \overleftarrow{\partial}_+}{\overrightarrow{\partial}_++\overleftarrow{\partial}_+}\Bigg) \partial_+\bar{A}^a \\
				\bar{T}_s &=
				\frac{1}{2} \partial_+A^a(i\overrightarrow{\partial}_+ + i\overleftarrow{\partial}_+)^{s-1}P^{(2,1)}_{s-1}\Bigg(\frac{\overrightarrow{\partial}_+ - \overleftarrow{\partial}_+}{\overrightarrow{\partial}_++\overleftarrow{\partial}_+}\Bigg) \bar{\lambda}^a\,,
			\end{aligned}$
		} 
	\end{equation}
	with $S^A_s$ and $S^\lambda_s$ being even spin operators and where $C^{\alpha}_l(x)$ are the Gegenbauer polynomials \cite{Braun:2003rp}. These operators represent the restriction to components with the maximal spin projection $s$ along the $p_+$ direction of linear combinations of twist-$2$ operators of the form
	\begin{align} \label{1}
		&O^{A\,\mathcal{T}=2}_{s} \quad=\quad \Tr\, F^\mu_{(\rho_1}\overleftarrow{D}_{\rho_2}\ldots \overrightarrow{D}_{\rho_{s-1}}F_{\rho_s)\mu}-\,\text{traces}\qquad\qquad \nonumber\\
		&\tilde{O}^{A\,\mathcal{T}=2}_{s} \quad=\quad \Tr\, \tilde{F}^\mu_{(\rho_1}\overleftarrow{D}_{\rho_2}\ldots \overrightarrow{D}_{\rho_{s-1}}F_{\rho_s)\mu}-\,\text{traces}\qquad\qquad \nonumber\\
		&O^{\lambda\,\mathcal{T}=2}_{s} \quad=\quad \Tr\, \bar{\chi}\gamma_{(\rho_1}\overleftarrow{D}_{\rho_2}\ldots \overrightarrow{D}_{\rho_{s-1})}\chi-\,\text{traces}\qquad\qquad \nonumber\\
		&\tilde{O}^{\lambda\,\mathcal{T}=2}_{s} \quad=\quad \Tr\, \bar{\chi}\gamma_{(\rho_1}\gamma_5\overleftarrow{D}_{\rho_2}\ldots \overrightarrow{D}_{\rho_{s-1})}\chi-\,\text{traces}\qquad\qquad \nonumber\\
		&M^{\mathcal{T}=2}_{s} \quad=\quad \Tr\, F_{(\rho_1}^{\nu}\overleftarrow{D}_{\rho_2}\ldots \overrightarrow{D}_{\rho_{s-1})}\sigma_{\nu}\lambda-\,\text{traces}\nonumber\\
		&\bar{M}^{\mathcal{T}=2}_{s} \quad=\quad \Tr\,\bar{\lambda}\,\bar{\sigma}_{\nu} \overleftarrow{D}_{(\rho_{s-1}}\ldots \overrightarrow{D}_{\rho_{2}} F_{\rho_{1})}^{\nu}-\,\text{traces}\nonumber\\
		&\mathbb{S}^{A\,\mathcal{T}=2}_{s}=\quad \Tr\, (F_{\mu(\nu}+i\tilde{F}_{\mu(\nu})\overleftarrow{D}_{\rho_1}\ldots \overrightarrow{D}_{\rho_{s-2}}(F_{\lambda)\sigma}+i\tilde{F}_{\lambda)\sigma})-\,\text{traces}\nonumber\\
		&S^{\lambda\,\mathcal{T}=2}_{s} \quad=\quad \Tr\, \bar{\chi}\sigma_{\mu(\rho_1}\overleftarrow{D}_{\rho_2}\ldots \overrightarrow{D}_{\rho_{s-1})}\chi-\,\text{traces}\qquad\qquad \nonumber\\
		&T^{\mathcal{T}=2}_{s+\frac{1}{2}} \quad=\quad \Tr\, F_{(\rho_1}^{\nu}\overleftarrow{D}_{\rho_2}\ldots \overrightarrow{D}_{\rho_{s-1}}\sigma_{\rho_s)\nu}\chi-\,\text{traces}\,,
	\end{align}
	including all possible combinations of right and left derivatives \cite{makeenko,Belitsky:2007jp}, where the parentheses indicate symmetrization of the enclosed indices and the trace subtraction ensures that any two-index contraction vanishes. \par
	To the leading order in perturbation theory, appropriate linear combinations of these twist-$2$ operators are conserved \cite{makeenko,Belitsky:2007jp}, and they automatically transform as primary operators under the conformal group \cite{Beisert:2004fv,makeenko,Belitsky:2007jp}.
	\subsection{Balanced and unbalanced superfields}
	For the balanced superfields we get \cite{SS1}
	\begin{align}
	\mathbb{W}_s(x,\theta,\bar{\theta}) \sim \mathbb{S}^{(2)}_{s+1}+\theta \bar{M}_{s+1}+\bar{\theta}M_{s+1}+\theta\bar{\theta}\mathbb{S}^{(1)}_{s+2}\, ,
	\end{align}
	where $\mathbb{S}^{(i)}= \{S^{(i)},\tilde{S}^{(i)}\}$ include both even- and odd-spin operators.
	For even spin with $s \geq 2$
	\begin{align}
		\label{S}
		& S^{(1)}_s = \frac{6}{s-1}O^A_s -  O^\lambda_s \nonumber\\
		& S^{(2)}_s = \frac{6}{s+2}\, O^A_s +O^\lambda_s
	\end{align}
	and for odd spin with $s\geq 3$
	\begin{align}
		\label{Stilde}
		& \tilde{S}^{(1)}_s =   -\frac{6}{s-1}\tilde{O}^A_s -  \tilde{O}^\lambda_s \nonumber\\
		& \tilde{S}^{(2)}_s =  -\frac{6}{s+2}\,\tilde{O}^A_s + \tilde{O}^\lambda_s\, ,
	\end{align}
	where $\tilde{O}^A_s$ is not defined for $s = 1$, whereas $\tilde{O}^\lambda_1$ is defined, and $\tilde{S}^{(2)}_1 =\tilde{O}^\lambda_1$.\par
	These operators also diagonalize the anomalous dimension matrix to order $g^2$, where SYM theory is conformally invariant in the conformal scheme \cite{Braun:2003rp}. \par
	Similarly, we get for the unbalanced superfields \cite{SS1}
		\begin{align}
	\mathbb{W}^+_s(x,\theta,\bar{\theta}) \sim T_{s-1}+\theta S^A_s+\bar{\theta}\bar{S}^\lambda_{s}+\theta\bar{\theta}T_{s}+\theta\bar{\theta}\,\partial_+T_{s-1}
	\end{align}
	and
	\begin{align}
	\mathbb{W}^-_s(x,\theta,\bar{\theta}) \sim \bar{T}_{s-1}+\theta \bar{S}^A_s+\bar{\theta}S^\lambda_{s}+\theta\bar{\theta}\bar{T}_{s}+\theta\bar{\theta}\,\partial_+\bar{T}_{s-1}\,.
	\end{align}
	These operators also diagonalize the anomalous dimension matrix to order $g^2$  \cite{SS1}.

	\subsection{Anomalous dimensions}\label{sec:s12}
	
	The maximal-spin components of the operators $\mathcal{O}_s$ mentioned above only mix with derivatives along the $p_+$ direction of operators of the same type but with lower spin and identical canonical dimensions \cite{Braun:2003rp,Belitsky:1998gc}. We define the bare operators for $s\geq k$ as
	\begin{equation}
		\mathcal{O}^{B\,(k)}_s =(i\partial_+)^{k}\mathcal{O}^{B}_s
	\end{equation}
	which, at the leading order of perturbation theory and for $k >0$, are conformal descendants of the corresponding primary conformal operator $\mathcal{O}^{B\,(0)}_s = \mathcal{O}^{B}_s $. 
	As a result of operator mixing, we obtain for the renormalized operators \cite{Braun:2003rp,Belitsky:1998gc}
	\begin{equation}\label{m}
		\mathcal{O}^{(k)}_s= \sum_{s\geq i \geq 2} Z_{si}\, \mathcal{O}^{B\,(k+s-i)}_{i}\,,
	\end{equation}
	where the bare mixing matrix $Z$ is lower triangular\footnote{$Z$, which in dimensional regularization depends on $g$ and $\epsilon$, should not be confused with $Z(\lambda)$.} \cite{Braun:2003rp,Belitsky:1998gc}.\par
	Therefore, the anomalous-dimension matrix $\gamma(g)$ is generally lower triangular, though $\gamma_0$ is diagonal. 
	The eigenvalues of $\gamma_0$ for $\mathcal{O}_s = S^{(1)}_s,S^{(2)}_s,\tilde{S}^{(1)}_s,\tilde{S}^{(2)}_s,M_s,\bar{M}_s$ are given by
	\cite{Belitsky:2004sc}
	\begin{equation}
		\gamma_{0\,\mathcal{O}_s} = \frac{1}{4 \pi^2} \Big( \tilde{\gamma}^0_{\mathcal{O}_s} - \frac{3}{2}\Big)\,
	\end{equation}
	with
	\begin{align}
		&\tilde{\gamma}_{0\,S^{(1)}_s} = \psi(s + 2)+\psi(s-1) - 2\psi(1) - \frac{2(-1)^s}{(s+1)s(s-1)} \nonumber\\
		&\tilde{\gamma}_{0\,S^{(2)}_s} = \psi(s + 3)+\psi(s) - 2\psi(1) + \frac{2(-1)^s}{(s+2)(s+1)s}
	\end{align}
	and ~\cite{Belitsky:2004sc,Belitsky:2003sh}
	\begin{align}\label{eq:otherbalanced}
		&\tilde{\gamma}_{0\,{\tilde{S}}^{(i)}_s}=\tilde{\gamma}_{0\,S^{(i)}_s}\nonumber\\
		&\tilde{\gamma}_{0\,M_s} =\tilde{\gamma}_{0\,S^{(2)}_s} 
	\end{align}
	Besides \cite{Belitsky:2004sc},
	\begin{equation}
		\gamma_{0\,\tilde{O}^\lambda_1} = \frac{1}{4 \pi^2}  \frac{2}{3} = \frac{1 }{6 \pi^2} \,.
	\end{equation}
	For the operators $\mathcal{O} = S^{A},\bar{S}^{A},S^{\lambda},\bar{S}^{\lambda},T,\bar{T}$, $\gamma_0$ is diagonal, with eigenvalues \cite{Belitsky:2004sc}
	\begin{equation}
		\gamma_{0\mathcal{O}_s} = \frac{1}{4 \pi^2} \Big( \tilde{\gamma}_{0\mathcal{O}_s} - \frac{3}{2}\Big)\,,
	\end{equation}
	where \cite{Belitsky:2004sc}
	\begin{align}
		&\tilde{\gamma}_{0S^{A}_s} = 2\psi(s + 1)- 2\psi(1) \nonumber\\
		&\tilde{\gamma}_{0S^{\lambda}_s} = 2\psi(s + 1)- 2\psi(1) 
	\end{align}
	and \cite{Belitsky:2004sc,Belitsky:2003sh}
	\begin{align}
		\tilde{\gamma}_{0T_s} 
		=\begin{cases}
			\psi(s+1)-\psi(1),&s=2, 4 , \ldots\\
			\psi(s+2)-\psi(1), & s=1,3, \ldots \,.
		\end{cases}
	\end{align}
	We have numerically verified that the initial $10^4$ eigenvalues of $\frac{{\gamma}_{0\mathcal{O}_s}}{\beta_0}$ are nonresonant, with $\beta_0 = \frac{3}{(4\pi)^2}$.\par
	The eigenvalues of $\gamma_0$ are naturally ordered in increasing sequence with increasing $s$, which is contrary to the ordering in Theorem \ref{th:1}. However, it can be easily seen from the proof in \cite{Bochicchio:2021geometry} that for this case, the nonresonant condition takes the form
	\begin{equation}
		\label{eq:nonresonant2}
		\lambda_j-\lambda_i\neq 2k\ , \qquad j>i\ , \qquad k\in\mathbb{N}^+\,,
	\end{equation}
	with $\lambda_1\leq \lambda_2\leq \lambda_3\leq \ldots$.
	\section{Number-theoretic concepts}\label{sec:s2}
	\subsection{$p$-adic order}
	The $p$-adic order of an integer $n$ is defined as the exponent of the highest power of a prime number $p$ that divides $n$ \cite{ireland1990classical}. 
	Specifically, the $p$-adic order of an integer is the function
	\begin{equation}
		\nu_p(n)=
		\begin{cases}
			\mathrm{max}\{k \in \mathbb{N} : p^k\,\text{divides}\,n\} & \text{if } n \neq 0\\
			\infty & \text{if } n=0\,.
		\end{cases}
	\end{equation}
	For example, $\nu_3(24) = \nu_3(3\times2^3) = 1$ and $\nu_2(24) = 3$.\par
	This concept can be extended to rational numbers through the property \cite{ireland1990classical}
	\begin{equation}
		\label{eq:p1}
		\nu_p\left(\frac{a}{b}\right)  =\nu_p(a)-\nu_p(b)\,.
	\end{equation}
	Consequently, rational numbers may have a negative $p$-adic order, whereas integers can only have non-negative values for any prime $p$.
	Additional properties include \cite{ireland1990classical}
	\begin{align}
		\label{eq:p2}
		&\nu_p(a\cdot b) = \nu_p(a)+\nu_p(b)\nonumber\\
		&\nu_p(a+b)\geq\min\bigl\{ \nu_p(a), \nu_p(b)\bigr\}\,.
	\end{align}
	When $\nu_p(a) \neq \nu_p(b)$ 
	\begin{equation}
		\label{equal}
		\nu_p(a+b)= \min\bigl\{ \nu_p(a), \nu_p(b)\bigr\}
	\end{equation}
	\cite{ireland1990classical}, a fact which is crucial for the subsequent proof.
	\subsection{Harmonic numbers and Bertrand's postulate}
	Bertrand's postulate\footnote{It is actually a theorem.} \cite{cebby,ramanujan1919proof} asserts that for any real number $x\geq2$, there is at least one prime number $p$ satisfying
	\begin{equation}\label{eq:bertrand}
		\frac{x}{2}+1\leq p\leq x\,.
	\end{equation} 
	This implies that for any prime $p\in\left[\frac{x}{2}+1,x\right]$, its double, $2p$, cannot lie within the same interval, since $2p\geq x+2$. \par
	We will apply Bertrand's postulate to prove the classical result that Harmonic numbers, $H_n$,
	\begin{align}\label{eq:Hn}
		H_n = \sum_{k = 1}^{n}\frac{1}{k}\,,
	\end{align}
	are never integers for any $n\geq 2$.
	\subsection{Standard argument}
	Let $p$ be a prime number within the interval 
	\begin{equation}\label{eq:postulaten}
		\frac{n}{2}+1\leq p\leq n\,.
	\end{equation}
	For such a prime $p$, the term $\frac{1}{p}$ appears in the summation of Eq. \eqref{eq:Hn}. However, no term with $k>p$ can have $p$ as a prime factor, as its prime factorization would have to include at least $2p$, which falls outside the specified interval. Thus, with the exception of $\frac{1}{p}$, the denominator $k$ of every term $\frac{1}{k}$ in the sum is divisible only by primes other than $p$. Consequently, if we write the sum as
	\begin{equation}
		\sum_{k=1}^n\frac{1}{k} = \frac{1}{p} + \frac{a}{b}\,,
	\end{equation}
	then the denominator $b$ is not divisible by $p$, i.e., $\gcd(b,p)=1$. This leads to the conclusion that Harmonic numbers are not integers. To be more precise,
	\begin{equation}
		\nu_p\left(\frac{1}{p}\right)=-1\qquad \text{while}\qquad \nu_p\left(\frac{a}{b}\right) = \nu_p(a)>0\,.
	\end{equation} 
	Then, observing that $\nu_p\left(\frac{1}{p}\right)\neq \nu_p\left(\frac{a}{b}\right)$, it follows from Eq. \eqref{equal}
	\begin{equation}
		\nu_p\left(\frac{1}{p}+\frac{a}{b}\right) =\min\left(\nu_p\left(\frac{1}{p}\right),\nu_p\left(\frac{a}{b}\right)\right) = -1
	\end{equation}
	and, finally,
	\begin{equation}
		\nu_p(H_n) = \nu_p\left(\frac{1}{p} + \frac{a}{b}\right) = \min\left(\nu_p\left(\frac{1}{p}\right),\nu_p\left(\frac{a}{b}\right)\right) = -1\,.
	\end{equation}
	Thus, because the $p$-adic order of $H_n$ is negative, $H_n$ cannot be an integer. \\
	This line of reasoning will be referred to as the \textit{standard argument}, since it will be used multiple times in the subsequent sections. 
	\subsection{Generalized argument}\label{sec:generalizedsec}
	We now consider sums that are more complex than the Harmonic numbers
	\begin{equation}\label{eq:general}
		\Omega_n = \sum_{k=1}^{n}\frac{c_k}{k}\,.
	\end{equation}
	We start with the case where the coefficients $c_k$ can assume positive or negative values, such as $\pm1$ or $\pm2$.\par
	The coefficients $\pm1$ clearly do not affect the standard argument, as $1$ is coprime with any prime $p$. If $p$ is found by Bertrand's postulate, there is no $k\neq p$ in the sum that has $p$ in its prime factorization. Therefore, $\Omega_n$ is not an integer according to the standard argument. \par 
	Slightly more caution is required when $\pm2$ appears in the numerators. For $n\geq3$, Bertrand's postulate guarantees the existence of a prime $ p\geq 3$ satisfying Eq. \eqref{eq:postulaten} such that 
	\begin{equation}
		\Omega_n  = \frac{c_p}{p}+\frac{a}{b}
	\end{equation}
	where $\gcd(b,p)=\gcd(c_p,p)=1$. In this scenario, $c_p$ and $p$ are indeed coprime and, as before, no other denominator $k$ in the sum shares $p$ as a prime factor, even if other coefficients $c_k$ take values $\pm1, \pm2$. \par 
	For instance, let us suppose there are terms with denominators
	\begin{equation}\label{eq:pp}
		k_1=p'
	\end{equation}
	and
	\begin{equation}\label{eq:2pp}
		k_2=2p'
	\end{equation}
	such that $c_{k_2}=\pm2$; then the terms $\frac{c_{k_1}}{k_1}=\frac{c_{k_1}}{p'}$ and $\frac{\pm 2}{k_2}=\frac{\pm1}{p'}$ would combine to form $\frac{c_{p'}}{p'}$, where $c_{p'}$ could potentially be zero. However, according to Eqs. \eqref{eq:pp} and \eqref{eq:2pp}, $p'$ is not one of the primes identified by Bertrand's postulate. Therefore, its potential absence from the sum does not impact the standard argument, which remains valid.\par
	More broadly, if $c_k\in \mathbb{Z} \setminus \{0\}$ and if a prime $p$ can be found that satisfies Eq. \eqref{eq:postulaten} and $\gcd(c_p,p)=1$, then the standard argument remains applicable. This is because, according to Bertrand's postulate, all terms in the sum other than $\frac{c_p}{p}$ will combine into a fraction whose denominator is not divisible by $p$.  It is clear by the nature of the sum in Eq. \eqref{eq:general} that there is an implicit condition on $n$ that must be satisfied, namely, given a certain $c_k\in \mathbb{Z} \setminus \{0\}$, the value of $n$ must be such that a suitable $p$ exists for the requirement $\gcd(c_p,p)=1$ to be fulfilled. This condition on $n$ is something that must be checked on a case by case situation for every proof given the sequence $c_k$.\par 
	Finally, we examine the most complex case, where some coefficients $c_k$ are permitted to be zero. For the argument to hold, there must be at least one prime $p$ satisfying Eq. \eqref{eq:postulaten} for which $c_p\neq 0$ and $\gcd(c_p,p)=1$. When this condition is met, even with an arbitrary number of zero coefficients $c_k$, the standard argument can still be applied to demonstrate that $K_n$ is not an integer.\par 
	Evidently, verifying this condition requires a direct, case-by-case inspection of the sum.
	
	\section{Proof of the nonresonant condition}\label{sec:s3}
	This section provides the proof that the eigenvalues of the anomalous dimensions for the aforementioned twist-$2$ operators are nonresonant, as defined in Eq. \eqref{eq:nonresonant2}. \par
	We begin by recalling that the digamma function can be expressed as
	\begin{equation}
		\label{gammaH}
		\psi(n+1) = H_n-\gamma\,,
	\end{equation}
	where $\gamma$ is the Euler-Mascheroni constant. 
	\subsection{Nonresonant condition for unbalanced twist-$2$ operators}\label{sec:proof1}
	Using Eq. \eqref{gammaH}, we express the anomalous dimension of $S^A_s$ and $S^\lambda_s$ in a more suitable form
	\begin{align}
		\gamma^{S^A}_{0n}=	\gamma^{S^\lambda}_{0n}=\frac{2}{(4 \pi)^2} \left(2H_n-\frac{3}{2}\right)\,,
	\end{align}
	with $n=2,4,6,\ldots$.
	\begin{lemma}\label{lemma:1}
		The sequence $\gamma^{S^A}_{0n}$ is monotonically increasing
		\begin{equation}
			\gamma^{S^A}_{0n+1}\geq \gamma^{S^A}_{0n}
		\end{equation}
	\end{lemma}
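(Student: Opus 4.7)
The proof is essentially immediate from the explicit closed form, so the plan is very short. Using the formula
\begin{equation*}
\gamma^{S^A}_{0n}=\frac{2}{(4\pi)^2}\left(2H_n-\frac{3}{2}\right),
\end{equation*}
I would form the difference $\gamma^{S^A}_{0,n+1}-\gamma^{S^A}_{0,n}$ directly. The constant term $-\tfrac{3}{2}$ cancels, and the positive overall prefactor $\tfrac{2}{(4\pi)^2}$ can be pulled out, so the sign of the difference is controlled entirely by $2(H_{n+1}-H_n)$.

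The next step is to invoke the defining identity $H_{n+1}-H_n=\tfrac{1}{n+1}$, which follows at once from Eq. \eqref{eq:Hn}. This yields
\begin{equation*}
\gamma^{S^A}_{0,n+1}-\gamma^{S^A}_{0,n}=\frac{2}{(4\pi)^2}\cdot\frac{2}{n+1}>0
\end{equation*}
for every $n\geq 1$, which is in fact the strict inequality and therefore implies the stated weak monotonicity. If one wishes to respect the even-spin restriction $n=2,4,6,\ldots$ mentioned just above the lemma, the same computation gives $H_{n+2}-H_n=\tfrac{1}{n+1}+\tfrac{1}{n+2}>0$, so the sequence is increasing along the relevant arithmetic progression as well.

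There is no genuine obstacle here: the only thing worth double-checking is the sign of the prefactor and the telescoping of the harmonic numbers, both of which are manifest. No number-theoretic input (Bertrand's postulate, $p$-adic orders) is needed for this lemma; the nontrivial ingredients introduced in Section \ref{sec:s2} will only come into play when one later has to rule out that the positive differences equal $2k$ with $k\in\mathbb{N}^+$, i.e., when the nonresonant condition itself is proved.
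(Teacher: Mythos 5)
Your proof is correct and follows essentially the same route as the paper: compute the difference $\gamma^{S^A}_{0,n+1}-\gamma^{S^A}_{0,n}$, note that the constant cancels, and use $H_{n+1}-H_n=\frac{1}{n+1}$ to obtain $\frac{4}{(4\pi)^2}\frac{1}{n+1}>0$. Your side remark about the even-spin progression is a minor but reasonable extra check that the paper omits.
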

	\textit{Proof.}\\
	\begin{equation}
		\gamma^{S^A}_{0n+1}-\gamma^{S^A}_{0n} = \frac{4}{(4\pi)^2}\frac{1}{n+1}> 0 \,.
	\end{equation}
	\qed\\\\
	Therefore, the sequence $\gamma_{0n}^{S^A}$ (and $\gamma_{0n}^{S^\lambda}$) is increasing and matches the ordering in Eq. \eqref{eq:nonresonant2}.\\
	\begin{theorem}
		The eigenvalues of $\frac{\gamma^{S^{A,\lambda}}_0}{\beta_0}$ are nonresonant
		\begin{equation}\label{eq:th1}
			\frac{\gamma_{0n}^{S^{A,\lambda}}-\gamma_{0m}^{S^{A,\lambda}}}{\beta_0} \neq 2k\,,\qquad k\in\mathbb{N}^+,\quad \forall n>m\geq2\,,
		\end{equation}
		where $\beta_0 = \frac{3}{(4\pi)^2}$.\\ 
	\end{theorem}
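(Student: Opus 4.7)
The plan is to reduce the nonresonance condition to a non-integrality statement about $2(H_n - H_m)$ and then to dispatch it by combining the standard $p$-adic argument of Section \ref{sec:s2} with an elementary size bound.

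Using the explicit form of $\gamma^{S^{A,\lambda}}_{0n}$ recalled just above and the value $\beta_0 = 3/(4\pi)^2$, the ratio on the left-hand side of Eq. \eqref{eq:th1} simplifies to
\begin{equation}
\frac{\gamma^{S^{A,\lambda}}_{0n} - \gamma^{S^{A,\lambda}}_{0m}}{\beta_0} = \frac{4(H_n - H_m)}{3},
\end{equation}
so the theorem is equivalent to showing that $2(H_n - H_m) \neq 3k$ for every $k \in \mathbb{N}^+$. I would then split the argument according to the size of $n$ relative to $m$, recalling that here $n$ and $m$ are even integers with $n > m \geq 2$.

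In the regime $n > 2m$, I would apply Bertrand's postulate, Eq. \eqref{eq:bertrand}, to produce a prime $p$ with $n/2 < p \leq n$. Since $n$ is even and $n > 2m \geq 4$, one has $n \geq 2m + 2$ and thus $n/2 \geq m + 1 \geq 3$, which forces $p \geq 5$; in particular $p$ is odd and $p \geq m + 2 > m$, so $1/p$ genuinely appears in $H_n - H_m = \sum_{j = m+1}^{n} 1/j$. Because $p > n/2$ we have $2p > n$, so no other index in the sum is divisible by $p$; and because $p$ is odd, $\nu_p(2/j) \geq 0$ for $j \neq p$ while $\nu_p(2/p) = -1$. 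The standard argument, together with Eq. \eqref{equal}, then yields $\nu_p(2(H_n - H_m)) = -1$, so $2(H_n - H_m)$ is not an integer and a fortiori cannot equal $3k$.

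In the complementary regime $m < n \leq 2m$, Bertrand's postulate does not deliver a prime in $(m, n]$ and the $p$-adic attack breaks down. However, $H_n - H_m$ now contains at most $m$ terms, each bounded above by $1/(m+1)$, so $H_n - H_m \leq m/(m+1) < 1$ and therefore $\frac{4(H_n - H_m)}{3} < 4/3 < 2$. Since Lemma \ref{lemma:1} secures that this quantity is strictly positive, it lies in the open interval $(0, 2)$ and cannot coincide with $2k$ for any $k \in \mathbb{N}^+$. The main obstacle — and the reason for the case split — is precisely this boundary regime $n \leq 2m$, where the Bertrand-based $p$-adic argument cannot reach because the relevant primes may all sit at or below $m$; what keeps the proof elementary is that the same regime is exactly the one in which the resonance value $2k \geq 2$ is ruled out for free by the crude size estimate, so no refinement of Bertrand's postulate is required.
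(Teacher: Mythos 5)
Your proof is correct and follows the same core strategy as the paper: reduce the nonresonance condition to a statement about the block $H_n-H_m$, apply Bertrand's postulate and the $p$-adic standard argument when the block is long enough to trap a prime, and use a size bound when it is short. The paper parametrizes as $n=m+x$ and splits at $x\ge m$ versus $x<m$ (i.e.\ $n\ge 2m$ versus $n<2m$), whereas you split at $n>2m$ versus $n\le 2m$; this is cosmetic, but it has a small payoff: you never meet the boundary case $n=2m$, $m=2$ (the paper's $\Sigma_2(2)=\tfrac{7}{12}$) inside the $p$-adic branch, so you avoid the explicit check the paper performs there. Your treatment of the short-block regime is also a little more elementary: you bound $H_n-H_m\le m/(m+1)<1$ by simply counting at most $m$ terms each at most $1/(m+1)$, whereas the paper reaches $\Sigma_m(x)<\log 2$ via the integral-sandwich estimate in Corollary~\ref{cor:harmonic-blocks}. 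Both bounds suffice, since the first excluded resonance value is $2$; the paper's heavier machinery exists because the same lemma is reused for the more intricate sums $K_m$, $J_m$, $L_m$, $U_m$ appearing in the balanced-operator theorems, where the crude $1/(m+1)$ bound would not absorb the extra rational tails.
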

	\textit{Proof.}\\
	Let us set $n =m+x$, with $x\geq1$ being a natural number. Equation \eqref{eq:th1} can then be written as
	\begin{equation}
		\Delta_m^{S^{A,\lambda}} (x)=\frac{\gamma_{0{m+x}}^{S^{A,\lambda}}-\gamma_{0m}^{S^{A,\lambda}}}{\beta_0} = \frac{4}{3}\sum_{k = m+1}^{m+x}\frac{1}{k} = \frac{4}{3}\Sigma_m(x)\,,
	\end{equation}
	with
	\begin{equation}
		\Sigma_m(x) = \sum_{k = m+1}^{m+x}\frac{1}{k}\,.
	\end{equation}
	We parametrize $x$ as
	\begin{align}
		x = m+t\qquad t\geq0\,,
	\end{align}
	which leaves out the first part of the proof for all possible values of $x<m$; we will consider these later. Hence,
	\begin{align}
		\Delta_m^{S^{A,\lambda}} (m+t)= \frac{4}{3}\sum_{k = m+1}^{2m+t}\frac{1}{k}\,.
	\end{align}
	By Bertrand's postulate, there again exists a prime $p$ in the interval 
	\begin{equation}\label{eq:pin1}
		m+1+\frac{t}{2}\leq p\leq 2m+t\,.
	\end{equation}
	Therefore, applying the standard argument yields 
	\begin{equation}
		\nu_p(\Sigma_m(m+t)) =-1\,.
	\end{equation}
	Thus, using Eqs. \eqref{eq:p1} and \eqref{eq:p2}
	\begin{align}
		\nu_p\left(\frac{4}{3}	\Sigma_m(m+t)\right) &= \nu_p\left(\frac{4}{3}\right)+\nu_p(	\Sigma_m(m+t)) \nonumber\\
		&= \nu_p(4)-\nu_p(3)-1\,.
	\end{align} 
	Then, for $m>2$ or for $m=2$ with $t>0$, Eq. \eqref{eq:pin1} implies $p\geq 5$, so that $\nu_p(4)=0$ and
	\begin{equation}
		\nu_p\left(\frac{4}{3}	\Sigma_m(m+t)\right) <0.
	\end{equation}
	For the special case $m=2$ and $t= 0$, we find $\Sigma_2(2) = \frac{7}{12}$ and  $	\Delta_2^{S^{A,\lambda}} (2) = \frac{7}{9}$. We thus conclude that for $x\geq m$, the $p$-adic order of $\nu_p(\Delta_m^{S^{A,\lambda}}(x))<0$, which means it cannot be an integer.\par
	The preceding part of the proof did not consider the case where $x<m$. We now address this case, beginning by demonstrating as in Proposition \ref{prop:Sigma-main} that
	\begin{align}
\label{eq:ineq_rephrased}
		\Sigma_m(x) <\log(2)<1\qquad\qquad\forall x<m\,.
	\end{align}	
	Hence, Eq. \eqref{eq:ineq_rephrased} implies
	\begin{align}
		\Delta_m^{S^{A,\lambda}}(x)= 	\frac{4}{3}\Sigma_m(x) <\frac{4}{3}\log(2)<0.93\qquad \forall x<m\, .
	\end{align}
	Therefore, $\Delta_m^{S^{A,\lambda}}(x)$ is strictly less than $1$.\\
	In conclusion, $\Delta_m^{S^{A,\lambda}}$ cannot be an integer for $x\geq m$, nor can it be an integer for $x<m$. This completes the proof of the theorem.
	\qed\\\\
	Using Eq. \eqref{gammaH}, we now write the anomalous dimension of $T_s$ in a more suitable form
	\begin{align}
		\gamma^{T}_{0n}=\begin{cases}
			\frac{2}{(4 \pi)^2} \left(H_n-\frac{3}{2}\right),&n=2, 4 , \ldots\\
			\frac{2}{(4 \pi)^2} \left(H_{n+1}-\frac{3}{2}\right), & n=1,3, \ldots \,.
		\end{cases}
	\end{align}
	Since the even and odd anomalous dimensions share the same functional dependence, we can parametrize them as $n=2l$ and $n=2l-1$ for even and odd spins respectively to obtain the same functional form. We can therefore work directly with
	\begin{align}
		\gamma^{T}_{0n}=
		\frac{2}{(4 \pi)^2} \left(H_{2n}-\frac{3}{2}\right),\qquad n=1,2,3,\ldots\,,
	\end{align}
	for all values of $n$.
	\begin{lemma}\label{lemma:11}
		The sequence $\gamma^{T}_{0n}$ is monotonically increasing
		\begin{equation}
			\gamma^{T}_{0n+1}\geq \gamma^{T}_{0n}
		\end{equation}
	\end{lemma}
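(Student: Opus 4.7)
The plan is to mimic exactly the proof of Lemma \ref{lemma:1}: compute the difference $\gamma^{T}_{0\,n+1}-\gamma^{T}_{0\,n}$ directly from the closed form and show it is positive. Since the anomalous dimension has been rewritten uniformly as $\gamma^{T}_{0n}=\frac{2}{(4\pi)^2}\bigl(H_{2n}-\frac{3}{2}\bigr)$ for all $n\geq 1$, the constant $-\frac{3}{2}$ cancels in the difference and only the Harmonic-number increment survives.

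Concretely, I would write
\begin{equation*}
\gamma^{T}_{0\,n+1}-\gamma^{T}_{0\,n}=\frac{2}{(4\pi)^2}\bigl(H_{2n+2}-H_{2n}\bigr)=\frac{2}{(4\pi)^2}\left(\frac{1}{2n+1}+\frac{1}{2n+2}\right),
\end{equation*}
which is manifestly strictly positive for every $n\geq 1$. This delivers the monotonicity claim in one line.

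There is really no obstacle here: the only subtlety worth mentioning is that the original piecewise definition of $\gamma^{T}_{0n}$ uses $H_n$ for even $n$ and $H_{n+1}$ for odd $n$, and these have been repackaged into a single expression $H_{2n}$ by the reparametrization $n=2l$ resp.\ $n=2l-1$ performed just above the lemma. Once one accepts that repackaging (which is simply a relabeling of the spin index used throughout the rest of the argument, consistent with the approach used for $S^{A,\lambda}$), the monotonicity is immediate, and the lemma is used afterwards only to certify that the ordering $\lambda_1\leq\lambda_2\leq\cdots$ required by Eq.\ \eqref{eq:nonresonant2} coincides with the natural ordering in the spin index $n$.
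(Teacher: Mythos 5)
Your computation is identical to the paper's: after the reparametrization $\gamma^{T}_{0n}=\frac{2}{(4\pi)^2}\bigl(H_{2n}-\frac{3}{2}\bigr)$, the difference of consecutive terms is $\frac{2}{(4\pi)^2}\bigl(\frac{1}{2n+1}+\frac{1}{2n+2}\bigr)>0$, which is exactly what the paper writes. Correct, same approach.
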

	\textit{Proof.}\\
	\begin{equation}
		\gamma^{T}_{0n+1}-\gamma^{T}_{0n} = \frac{2}{(4\pi)^2}\left(\frac{1}{2n+1}+\frac{1}{2n+2}\right)> 0 \,.
	\end{equation}
	\qed\\\\
	Therefore, the sequence $\gamma^{T}_{0n}$ is increasing and matches the ordering in Eq. \eqref{eq:nonresonant2}.\\
	\begin{theorem}
		The eigenvalues of $\frac{\gamma^{T}_0}{\beta_0}$ are nonresonant
		\begin{equation}\label{eq:th11}
			\frac{\gamma_{0n}^{T}-\gamma_{0m}^{T}}{\beta_0} \neq 2k\,,\qquad k\in\mathbb{N}^+,\quad \forall n>m\geq2\,,
		\end{equation}
		where $\beta_0 = \frac{3}{(4\pi)^2}$.\\ 
	\end{theorem}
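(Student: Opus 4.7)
The plan is to adapt the proof of the previous theorem almost verbatim, exploiting that $H_{2n}-H_{2m}$ is again a tail of harmonic numbers to which the standard argument applies. Setting $n=m+x$ with $x\geq 1$ a natural number, I would first rewrite
\begin{equation}
\Delta_m^T(x)=\frac{\gamma_{0,m+x}^{T}-\gamma_{0,m}^{T}}{\beta_0}=\frac{2}{3}\bigl(H_{2(m+x)}-H_{2m}\bigr)=\frac{2}{3}\sum_{k=2m+1}^{2m+2x}\frac{1}{k}=\frac{2}{3}\,\Sigma_{2m}(2x),
\end{equation}
so that the problem reduces to controlling the $p$-adic order of $\Sigma_{2m}(2x)$ in the presence of the rational prefactor $2/3$. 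As before, I would split the proof into the regime $x\geq m$, handled by Bertrand's postulate, and the regime $x<m$, handled by an elementary upper bound.

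In the regime $x\geq m$, I would parametrize $x=m+t$ with $t\geq 0$, so that $\Sigma_{2m}(2x)$ runs over $k\in\{2m+1,\ldots,4m+2t\}$. Bertrand's postulate applied to $y=4m+2t$ yields a prime $p$ in the interval $[2m+t+1,\,4m+2t]$. By construction this prime satisfies $p\geq 2m+1$, so $1/p$ actually appears in the sum, and $2p>4m+2t$, so $p$ divides no other denominator. The standard argument of Section \ref{sec:s2} then gives $\nu_p(\Sigma_{2m}(2x))=-1$. Since $m\geq 2$ together with $t\geq 0$ forces $p\geq 5$, one has $\nu_p(2)=\nu_p(3)=0$, hence $\nu_p(\Delta_m^T(x))=-1<0$, so $\Delta_m^T(x)$ is not an integer, let alone of the form $2k$.

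In the regime $x<m$, I would invoke Proposition \ref{prop:Sigma-main} with $M=2m$ and $X=2x<M$ to obtain $\Sigma_{2m}(2x)<\log 2$, so that
\begin{equation}
0<\Delta_m^T(x)=\frac{2}{3}\,\Sigma_{2m}(2x)<\frac{2}{3}\log 2<0.47<2,
\end{equation}
which already excludes $\Delta_m^T(x)=2k$ for every $k\in\mathbb{N}^+$. Together with the previous step this exhausts all cases.

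The main obstacle is essentially bookkeeping around the factor $2$ in $H_{2n}$: one must verify that the prime provided by Bertrand lies above the lower endpoint $2m+1$ of the summation window (so that it actually appears as a denominator in $\Sigma_{2m}(2x)$) and is coprime to the prefactor $6$ in $2/3$, so that its negative $p$-adic order is not erased. Both requirements follow from the single chain $p\geq 2m+t+1\geq 2m+1\geq 5$ valid for $m\geq 2$ and $t\geq 0$, so no new number-theoretic input beyond the previous theorem is required; the doubling $n\mapsto 2n$ only rescales the summation window without affecting the applicability of Bertrand's postulate or of the standard argument.
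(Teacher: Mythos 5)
Your proof is correct and follows essentially the same route as the paper: rewrite the difference as $\tfrac{2}{3}\sum_{k=2m+1}^{2m+2x}\tfrac{1}{k}$, apply Bertrand's postulate together with the standard $p$-adic argument for $x\geq m$ (noting the prime found is $\geq 5$, hence coprime to the prefactor $2/3$), and use the $\log 2$ integral bound from Proposition~\ref{prop:Sigma-main} for $x<m$. The only cosmetic difference is that the paper, using the slightly looser inequality $p\geq 2m+1+t$, separately checks the edge case $m=1,\,t=0$ (computing $\Delta_1^T(1)=\tfrac{7}{18}$), whereas under the theorem's stated hypothesis $m\geq 2$ your chain $p\geq 2m+t+1\geq 5$ dispenses with any special case.
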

	\textit{Proof.}\\
	Let us set $n=m+x$, where $x\geq1$ is a natural number. Equation \eqref{eq:th11} can then be written as
	\begin{equation}
		\Delta_m^{T} (x)=\frac{\gamma_{0{m+x}}^{T}-\gamma_{0m}^{T}}{\beta_0} = \frac{2}{3}\sum_{k = 2m+1}^{2m+2x}\frac{1}{k} = \frac{2}{3}\Sigma'_m(x)\,,
	\end{equation}
	with
	\begin{equation}
		\Sigma'_m(x) = \sum_{k = 2m+1}^{2m+2x}\frac{1}{k}\,.
	\end{equation}
	We parametrize $x$ as
	\begin{align}
		x = m+t\qquad t\geq0\,,
	\end{align}
	which excludes values of $x<m$ from this part of the proof. Thus,
	\begin{align}
		\Delta_m^{T} (m+t)= \frac{2}{3}\sum_{k = 2m+1}^{4m+2t}\frac{1}{k}\,.
	\end{align}
	By Bertrand's postulate, a prime exists in the interval 
	\begin{equation}\label{eq:pin2}
		2m+1+t\leq p\leq 4m+2t\,.
	\end{equation}
	Applying the standard argument gives
	\begin{equation}
		\nu_p(\Sigma'_m(m+t)) =-1\,.
	\end{equation}
	Using Eqs. \eqref{eq:p1} and \eqref{eq:p2}, we find
	\begin{align}
		\nu_p\left(\frac{2}{3}	\Sigma'_m(m+t)\right) &= \nu_p\left(\frac{2}{3}\right)+\nu_p(	\Sigma'_m(m+t)) \nonumber\\
		&= \nu_p(2)-\nu_p(3)-1\,.
	\end{align} 
	For $m>1$, or for $m=1$ with $t>0$, Eq. \eqref{eq:pin2} implies $p\geq 5$, meaning $\nu_p(2)=\nu_p(3)=0$ and
	\begin{equation}
		\nu_p\left(\frac{2}{3}	\Sigma'_m(m+t)\right) <0.
	\end{equation}
	For the case $m=1$ and $t= 0$, $\Sigma'_1(1) = \frac{7}{12}$ and  $	\Delta_1^{T} (1) = \frac{7}{18}$. We conclude that for $x\geq m$, the $p$-adic order of $\nu_p(\Delta_m^{T}(x))$ is negative, so it cannot be an integer.\par
	We now consider the case where $x<m$. By an argument identical to the one in Proposition \ref{prop:Sigma-main} we see that
	\begin{align}
\label{eq:ineq22}
		\Sigma'_m(x) <\log(2)<1\qquad\qquad\forall x<m\,.
	\end{align}	
	Equation \eqref{eq:ineq22} therefore implies
	\begin{align}
		\Delta_m^{T}(x)= 	\frac{2}{3}\Sigma'_m(x) <\frac{2}{3}\log(2)<0.5\qquad \forall x<m\, .
	\end{align}
	Thus, $\Delta_m^{T}(x)$ is strictly less than $1$.\\
	We conclude that $\Delta_m^{T}$ cannot be an integer for $x\geq m$, nor for $x<m$. The theorem is therefore proved.
	\qed
	\subsection{Nonresonant condition for balanced twist-$2$ operators of even spin}
	We now examine the anomalous dimension of balanced operators $S^{(1)}_s$ with even spin
	\begin{equation}
		\gamma^{S^{(1)}}_{0n}= \frac{2}{(4 \pi)^2} \left(
		2 H_{n-2}+\frac{3}{n}-\frac{3}{2} \right)
	\end{equation}
	for $n=2,4,6,\ldots$.
	\begin{lemma}\label{lemma2}
		The sequence $\gamma^{S^{(1)}}_{0n}$ is monotonically increasing
		\begin{equation}
			\gamma^{S^{(1)}}_{0n+1}\geq 	\gamma^{S^{(1)}}_{0n}
		\end{equation}
	\end{lemma}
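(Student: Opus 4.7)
The plan is to prove monotonicity by a direct algebraic computation of the one-step difference, exactly in the spirit of Lemmas \ref{lemma:1} and \ref{lemma:11}. First I would write
\begin{equation*}
\gamma^{S^{(1)}}_{0\,n+1}-\gamma^{S^{(1)}}_{0\,n}=\frac{2}{(4\pi)^{2}}\Bigl(2\bigl(H_{n-1}-H_{n-2}\bigr)+\frac{3}{n+1}-\frac{3}{n}\Bigr),
\end{equation*}
and use the defining identity of harmonic numbers, $H_{n-1}-H_{n-2}=\tfrac{1}{n-1}$, together with $\tfrac{3}{n+1}-\tfrac{3}{n}=-\tfrac{3}{n(n+1)}$, to rewrite the difference in closed form as
\begin{equation*}
\gamma^{S^{(1)}}_{0\,n+1}-\gamma^{S^{(1)}}_{0\,n}=\frac{2}{(4\pi)^{2}}\Bigl(\frac{2}{n-1}-\frac{3}{n(n+1)}\Bigr).
\end{equation*}

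Next I would put the bracket over the common denominator $(n-1)n(n+1)$ to reduce positivity of the difference to positivity of the quadratic numerator $2n^{2}-n+3$. The latter has discriminant $1-24=-23<0$ and positive leading coefficient, so it is strictly positive on all of $\mathbb{R}$; meanwhile the denominator $(n-1)n(n+1)$ is manifestly positive for every integer $n\geq 2$, which is the relevant range since the formula for $\gamma^{S^{(1)}}_{0\,n}$ requires $n\geq 2$ (with $H_{0}=0$). This establishes the strict inequality $\gamma^{S^{(1)}}_{0\,n+1}>\gamma^{S^{(1)}}_{0\,n}$, which is stronger than the asserted monotonicity.

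I do not anticipate any genuine obstacle here: the proof is purely elementary, mirroring the one-line computations already carried out in Lemmas \ref{lemma:1} and \ref{lemma:11}. The only minor subtlety is that the operator $S^{(1)}_{s}$ is defined only for even $s$, so the index shift $n\to n+1$ in the lemma should be read as a statement about the analytic continuation of the formula to all integers $n\geq 2$; this is enough to deduce monotonicity of the physical subsequence at even $n$, and is the same convention tacitly used in Lemma \ref{lemma:1}.
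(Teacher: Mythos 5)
Your proof is correct and follows the same route as the paper: write out the one-step difference, telescope the harmonic numbers with $H_{n-1}-H_{n-2}=\tfrac{1}{n-1}$, combine the rational terms, and check positivity of the resulting quadratic numerator. In fact your computation is a small improvement: you obtain the correct numerator $2n^{2}-n+3$ over $n(n^{2}-1)$ (and justify its positivity cleanly via the negative discriminant), whereas the paper writes the numerator as $2n(n+1)-(n^{2}+n-3)$, which expands to $n^{2}+n+3$ rather than $2n^{2}-n+3$; this is a harmless slip on the paper's part since both quantities are positive for $n\ge 2$, but your version is the accurate one. Your remark about interpreting the $n\to n+1$ shift on the even-spin subsequence as a statement about the formula extended to all $n\ge 2$ is also the convention the paper tacitly uses.
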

	\textit{Proof.}\\
	The difference between consecutive eigenvalues can be written as
	\begin{align}
		\gamma^{S^{(1)}}_{0n+1}-\gamma^{S^{(1)}}_{0n}	& = \frac{2}{(4\pi)^2}\frac{2n(n+1) - (n^2+n-3)}{n(n^2-1)}>0\,.
	\end{align}
	Therefore, $\gamma^{S^{(1)}}_{0n}$ is monotonically increasing and aligns with the ordering in Eq. \eqref{eq:nonresonant2}.
	\qed\\
	
	\begin{theorem}\label{th:2}
		The eigenvalues of $\frac{\gamma^{S^{(1)}}_0}{\beta_0}$ are nonresonant
		\begin{equation}
			\frac{\gamma_{0n}^{S^{(1)}}-\gamma_{0m}^{S^{(1)}}}{\beta_0} \neq 2k\,,\qquad k\in\mathbb{N}^+,\quad \forall n>m\geq 2\,,
		\end{equation}
		where $\beta_0 = \frac{3}{(4\pi)^2}$.
	\end{theorem}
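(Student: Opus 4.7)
The plan is to mirror the strategy of the two preceding theorems, treating the additional correction $3/n - 3/m$ in $\gamma_{0n}^{S^{(1)}}$ as a perturbation of the harmonic-sum structure. Setting $n = m + x$ with $x$ a positive even integer, I compute
\[
3\,\Delta_m^{S^{(1)}}(x) = 4\sum_{k=m-1}^{n-2}\frac{1}{k} + \frac{6}{n} - \frac{6}{m},
\]
and then split on the size of $x$. The novelty, relative to the unbalanced case, is that the $-6/m$ merges with the $k = m$ term of the sum (turning its coefficient from $4$ into $-2$), while $6/n$ contributes an extra term at the index $k = n$, one step beyond the sum range.

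For $x \geq m$, I set $x = m + t$ with $t \geq 0$ and apply Bertrand's postulate to $n - 2 \geq 2$, producing a prime $p \in [n/2, n-2]$. When $p > n/2$, the prime $p$ divides exactly one denominator in $\{m-1,\ldots,n-2,n\}$---namely $p$ itself, because $2p > n$, $p < n$, and $p > m$---the relevant coefficient is $4$, and the generalized argument of Sec.~\ref{sec:generalizedsec} yields $\nu_p(3\Delta_m^{S^{(1)}}(x)) = -1$, so $\Delta_m^{S^{(1)}}(x)$ is not an integer. When $p = n/2$ instead, $p$ also divides $n$, and the two contributions $4/p$ and $6/n = 3/p$ combine into $7/p$, which still has $\nu_p = -1$ provided $p \neq 7$. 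The only exceptional sub-case is $p = n/2 = 7$, corresponding to $n = 14$, which I handle by noticing that $11 \in [7, 12]$ is also prime and lies above $n/2$, reducing to the previous case. The trivial edge $m = 2$, $n = 4$ (where Bertrand only yields $p = 2$) is dispatched by direct computation: $\Delta_2^{S^{(1)}}(2) = 3/2$.

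For $x < m$---which requires $m \geq 4$, since $x$ is an even integer at least $2$---I use the integral inequality $\sum_{k=m}^{2m-2} 1/k \leq \log 2$ (the same bound underlying Proposition~\ref{prop:Sigma-main}) to get $H_{n-2} - H_{m-2} < 1/(m-1) + \log 2$, and combine with $2/n - 2/m < 0$ to obtain
\[
\Delta_m^{S^{(1)}}(x) < \frac{4}{3}\Big(\frac{1}{m-1} + \log 2\Big),
\]
which is strictly less than $2$ for every $m \geq 4$. Hence $\Delta_m^{S^{(1)}}(x)$ cannot be a positive even integer. The main obstacle, and the genuinely new step compared to the unbalanced proofs, is controlling the interaction between $4/p$ and $6/n$ when Bertrand happens to return exactly $p = n/2$: one must verify that the combined contribution $7/p$ still has negative $p$-adic valuation---a condition that fails only at $p = 7$ and is salvaged by the availability of a second prime in the Bertrand interval when $n = 14$.
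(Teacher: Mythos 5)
Your proof is correct and uses the same overall strategy as the paper: Bertrand's postulate with the generalized $p$-adic argument for $x \geq m$, and the integral bound for $x < m$. The main difference is the level of detail for $x \geq m$. You explicitly track how the coefficient-$4$ harmonic term at $k = p$ interacts with the extra $6/n$ when the Bertrand prime happens to equal $n/2$, obtain a combined residue $7/p$ (or $1/p$ when $p = m$, which only occurs at $m = 2$), identify the single exceptional value $p = 7$ (i.e.\ $n = 14$), and resolve it via the second prime $11 \in [7,12]$; the paper instead asserts that the generalized argument applies directly and checks a handful of small $m$, flagging $\Delta_2^{S^{(1)}}(3) = 1$, which corresponds to odd spin and hence does not actually enter the statement. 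You also restrict to even $x$ from the outset, which is the physically correct setting and is incidentally what makes the paper's Proposition~\ref{prop:K} hold as proved, since its bound $2\log\bigl(1 + \tfrac{x}{m-2}\bigr) \le 2\log 2$ needs $x \le m-2$. Your $x < m$ bound $\Delta_m^{S^{(1)}}(x) < \tfrac{4}{3}\bigl(\tfrac{1}{m-1} + \log 2\bigr) < 2$ is a little looser than the paper's $< \tfrac{4}{3}\log 2 < 1$, because you peel off the $k = m-1$ term rather than apply the integral estimate to the whole block, but it still rules out every positive even integer, which is all the nonresonance condition requires.
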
 
	\textit{Proof.}\\ 
	The proof follows a similar structure to that for unbalanced twist-$2$ operators, though with additional care as outlined in section \ref{sec:generalizedsec}.\par
	We again set $n =m+x$, where $x>0$ is a natural number. The difference of eigenvalues can then be expressed as
	\begin{align}
		\Delta^{S^{(1)}} _m(x)&=\frac{\gamma_{0{m+x}}^{S^{(1)}}-\gamma_{0m}^{S^{(1)}}}{\beta_0}\nonumber\\
		&= \frac{2}{3}\Bigg(2\sum_{k = m-1}^{m-2+x}\frac{1}{k}+\frac{3}{m+x}-\frac{3}{m}\Bigg)\nonumber\\
		&=\frac{2}{3}K_m(x)
	\end{align}
	with
	{\small
		\begin{align}\label{eq:si}
			K_m(x) =2\sum_{k = m-1}^{m-2+x}\frac{1}{k}+\frac{3}{m+x}-\frac{3}{m}\,.
		\end{align}
	}
	This sum clearly matches the form of Eq. \eqref{eq:general}, with coefficients $c_k=\pm2,\pm3$ and no gaps.  
	As before, we set
	\begin{align}
		x = m+t\qquad t\geq0\,,
	\end{align}
	which excludes a finite number of values $x<m$. For $x\geq m$, we have
	\begin{align}
		\Delta_m^{S^{(1)}} (m+t)&=\frac{2}{3}\left(2\sum_{k = m-1}^{2m-2+t}\frac{1}{k}+\frac{3}{2m+t}-\frac{3}{m}\right)\,.
	\end{align}
	The generalized argument applies directly for a prime in the interval 
	\begin{equation}
		m+\frac{t}{2}\leq p\leq 2m-2+t\,,
	\end{equation}
	when $m+\frac{t}{2}>3$. For $m\leq 3-\frac{t}{2}$, we must check a small number of cases directly. Notably, for $m=2$ and $t=1$, we find $\Delta_2^{S^{(1)}} (3)=1$. In all other instances, $\Delta_m^{S^{(1)}} (m+t)$ is not an integer. We conclude that $\Delta_m^{S^{(1)}} (m+t)$ is never an integer greater than $1$. \par
	We now consider the values of $x<m$.
	In this case, similar to section \ref{sec:proof1}, we show in Proposition \ref{prop:K} that the bound below holds
	\begin{align}
\label{eq:ineq2}
		K_m(x)<2\log(2)\,,\qquad \qquad\forall x<m\,.
	\end{align}	
	From Eq. \eqref{eq:ineq2}, it follows that
	\begin{align}
		\Delta_m^{S^{(1)}}(x)= 	\frac{2}{3}K_m(x) <\frac{4}{3}\log(2)<0.93\,\qquad\forall x<m\,.
	\end{align}
	We conclude that $\Delta_m^{S^{(1)}}$ cannot be an integer greater than $1$ for $x\geq m$ and cannot be an integer for $x<m$, which proves the theorem.
	\qed\\\\
	Next, we study the anomalous dimension of balanced operators $S^{(2)}_s$ of even spin
	\begin{equation}
		\gamma^{S^{(2)}}_{0n}= \frac{2}{(4 \pi)^2} \left(2 H_{n}+\frac{2(-1)^n}{(n+2)(n+1)n}-\frac{3}{2}\right)
	\end{equation}
	with $n=2,4,6,\ldots$.
	\begin{lemma}\label{lemma24}
		The sequence $\gamma^{S^{(2)}}_{0n}$ is monotonically increasing
		\begin{equation}
			\gamma^{S^{(2)}}_{0n+1}\geq 	\gamma^{S^{(2)}}_{0n}
		\end{equation}
	\end{lemma}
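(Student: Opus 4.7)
\textit{Proof plan.} The plan mirrors Lemma~\ref{lemma2}: compute the consecutive difference $\gamma^{S^{(2)}}_{0n+1}-\gamma^{S^{(2)}}_{0n}$ directly from the closed form and reduce monotonicity to a positivity check. The new wrinkle compared to $S^{(1)}$ is the alternating contribution $\tfrac{2(-1)^n}{n(n+1)(n+2)}$, which flips sign upon advancing $n \to n+1$, so I would split cases by the parity of $n$.

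Using $H_{n+1} - H_n = 1/(n+1)$ and setting $(-1)^n = +1$, $(-1)^{n+1} = -1$---the case relevant to the physical range of $S^{(2)}_s$, namely even $n \geq 2$---the two alternating terms reinforce each other with the \emph{same} (negative) sign, giving
\begin{equation*}
\gamma^{S^{(2)}}_{0n+1}-\gamma^{S^{(2)}}_{0n}
= \frac{4}{(4\pi)^2 (n+1)}\left[\,1 - \frac{1}{(n+2)(n+3)} - \frac{1}{n(n+2)}\,\right]\,.
\end{equation*}
Both subtracted terms are decreasing in $n$, so the bracket is minimized at $n=2$, where it equals $1 - \tfrac{1}{20} - \tfrac{1}{8} = \tfrac{33}{40} > 0$. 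For the opposite parity (should one wish to assert the inequality for all integer $n \geq 2$), the alternating contributions appear with the \emph{same} sign as $1/(n+1)$, and the difference is manifestly positive term-by-term.

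There is no real obstacle; the entire content reduces to the numerical inequality $\tfrac{1}{20} + \tfrac{1}{8} < 1$. The only subtlety, absent from the analogous Lemmas~\ref{lemma:1}, \ref{lemma:11}, and~\ref{lemma2}, is the sign bookkeeping for $(-1)^n$ when expanding the four alternating contributions. As with those earlier lemmas, the resulting monotonicity matches the nondecreasing ordering adopted in Eq.~\eqref{eq:nonresonant2}, preparing the ground for the nonresonance theorem that the excerpt is about to state.
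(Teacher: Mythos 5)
Your proof is correct and takes essentially the same route as the paper: write $\gamma^{S^{(2)}}_{0,n+1}-\gamma^{S^{(2)}}_{0n}$ using $H_{n+1}-H_n=1/(n+1)$ and check that the combination is positive. The paper simply writes down the difference and asserts $>0$ without further comment; you supply the parity bookkeeping and the explicit minimization at $n=2$ giving $33/40>0$, which is a more careful version of the same computation rather than a different argument.
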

	\textit{Proof.}\\
	The difference of consecutive eigenvalues can be written as
	\begin{align}
		\gamma^{S^{(2)}}_{0n+1}-\gamma^{S^{(2)}}_{0n}	& = \frac{2}{(4\pi)^2}\left(\frac{2}{n+1}+\frac{2(-1)^{n+1}}{(n+3)(n+2)(n+1)}-\frac{2(-1)^n}{(n+2)(n+1)n}\right)>0\,.
	\end{align}
	Therefore, $\gamma^{S^{(2)}}_{0n}$ increases monotonically and matches the ordering in Eq. \eqref{eq:nonresonant2}.
	\qed\\
	
	\begin{theorem}\label{th:24}
		The eigenvalues of $\frac{\gamma^{S^{(2)}}_0}{\beta_0}$ are nonresonant
		\begin{equation}
			\frac{\gamma_{0n}^{S^{(2)}}-\gamma_{0m}^{S^{(2)}}}{\beta_0} \neq 2k\,,\qquad k\in\mathbb{N}^+,\quad \forall n>m\geq 2\,,
		\end{equation}
		where $\beta_0 = \frac{3}{(4\pi)^2}$.
	\end{theorem}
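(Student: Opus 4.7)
The plan is to mirror the argument of Theorem \ref{th:2} for $S^{(1)}_s$, adapted to the additional alternating correction appearing in $\gamma^{S^{(2)}}_{0n}$. Setting $n=m+x$ with $x\geq 1$ a natural number, I would write
\begin{equation*}
\Delta^{S^{(2)}}_m(x) = \frac{\gamma^{S^{(2)}}_{0,m+x}-\gamma^{S^{(2)}}_{0m}}{\beta_0} = \frac{2}{3}\,K'_m(x),
\end{equation*}
with
\begin{equation*}
K'_m(x) = 2\sum_{k=m+1}^{m+x}\frac{1}{k} + \frac{2(-1)^{m+x}}{(m+x+2)(m+x+1)(m+x)} - \frac{2(-1)^m}{(m+2)(m+1)m},
\end{equation*}
which fits the template of the generalized argument in section \ref{sec:generalizedsec}: a core harmonic sum with coefficients $\pm 2$ plus two rational tails of the form $\pm 2/[j(j+1)(j+2)]$ at the endpoints $j=m$ and $j=m+x$. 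As in the proof of Theorem \ref{th:2}, I would split on $x\geq m$ versus $x<m$.

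For $x\geq m$, parametrize $x=m+t$ with $t\geq 0$ and invoke Bertrand's postulate to secure a prime $p\in[m+1+t/2,\,2m+t]$. Relegating to direct inspection the finitely many $(m,t)$ for which this window either forces $p<5$ or makes $p$ coincide with $m+1$ or $m+2$ (in the spirit of the $\Delta^{S^{(1)}}_2(3)=1$ check), the standard argument yields $\nu_p\!\left(\sum_{k=m+1}^{2m+t}1/k\right)=-1$. The constant tail at $j=m$ has nonnegative $p$-adic order because $p>m+2$. The dynamic tail at $j=2m+t$ has nonnegative $p$-adic order except in the borderline case $p=2m+t$---the only possibility since $2p>2m+t+2$---in which it combines with the leading $2/p$ to produce
\begin{equation*}
\frac{2}{p}+\frac{\pm 2}{(p+2)(p+1)p} = \frac{2\bigl[(p+2)(p+1)\pm 1\bigr]}{(p+2)(p+1)p},
\end{equation*}
whose numerator is $\equiv 2(2\pm 1)\not\equiv 0\pmod{p}$ for $p\geq 5$, so $\nu_p=-1$ is preserved. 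With $\nu_p(2/3)=0$ this gives $\nu_p(\Delta^{S^{(2)}}_m(m+t))<0$, ruling out integer values.

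For $x<m$, I would prove a bound analogous to Proposition \ref{prop:K}: combining the classical estimate $\sum_{k=m+1}^{2m}1/k<\log 2$ with $2/[j(j+1)(j+2)]\leq 1/12$ for $j\geq 2$ gives $K'_m(x)<2\log 2+\tfrac{1}{6}$, hence
\begin{equation*}
\Delta^{S^{(2)}}_m(x) = \tfrac{2}{3}\,K'_m(x) < \tfrac{2}{3}\!\left(2\log 2+\tfrac{1}{6}\right)<1.04.
\end{equation*}
Since $\Delta^{S^{(2)}}_m(x)>0$ by Lemma \ref{lemma24}, this excludes any value $2k$ with $k\in\mathbb{N}^+$.

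The main obstacle, absent in the $S^{(1)}$ treatment, is the alternating correction whose cubic denominator can absorb the Bertrand prime precisely when $2m+t$ is itself prime; the $p$-adic cancellation check just displayed is the delicate step, as one must confirm that the $2/p$ pole is not killed by the tail. A secondary, bookkeeping-style difficulty is the enumeration of the finitely many small $(m,t)$ for which the generic Bertrand window either yields $p<5$ or forces $p\in\{m+1,m+2\}$, each of which must be checked individually and shown not to produce an exceptional resonance.
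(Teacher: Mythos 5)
Your proposal follows the same overall strategy as the paper: write $\Delta^{S^{(2)}}_m(x)=\tfrac{2}{3}J_m(x)$ (your $K'_m$), split into $x\ge m$ and $x<m$, apply Bertrand's postulate and the generalized $p$-adic argument in the first regime, and an explicit upper bound in the second. Where you go beyond the paper is in the $x\ge m$ branch: the paper's treatment of the cubic endpoint terms is only a one-line assertion (``since the denominators of the fractional terms are polynomials in $m$ and $t$, they will be coprime with $p$ for sufficiently large $m$\ldots it can be verified''), whereas you actually carry out the $p$-adic computation in the one genuinely delicate case $p=2m+t$, showing that
\begin{equation*}
\frac{2}{p}+\frac{\pm 2}{(p+2)(p+1)p}
=\frac{2\bigl[(p+2)(p+1)\pm 1\bigr]}{(p+2)(p+1)p}
\end{equation*}
has numerator $\equiv 2(2\pm 1)\not\equiv 0\pmod p$ for $p\ge 5$, so $\nu_p=-1$ survives; this is precisely the step the paper leaves implicit and is worth spelling out. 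One caveat: your remark that the constant tail is harmless ``because $p>m+2$'' is not automatic for small $t$ (the Bertrand window $[m+1+t/2,\,2m+t]$ admits $p=m+1$ or $p=m+2$ when $t\le 2$), and the finiteness of the exceptional $(m,t)$ you defer to inspection ultimately rests on the availability of several primes in the window for large $m$; the paper glosses over the same point, so you are not worse off, but a cleaner route is to note that for even $m,n$ one has $(-1)^m=(-1)^{m+x}=1$, which rules out $p=m+x$ (even) and simplifies the sign bookkeeping. For $x<m$, your bound $K'_m(x)<2\log 2+\tfrac16$, hence $\Delta<\tfrac23(2\log 2+\tfrac16)<1.04<2$, is looser than the paper's Proposition \ref{prop:J-upper} but is correct and suffices; in fact the paper's intermediate tail estimate $\tfrac{2}{(m+3)(m+2)(m+1)}+\tfrac{2}{(m+2)(m+1)m}\le\tfrac{3}{m(m+2)(m+3)}$ fails numerically (try $m=2$: $\tfrac{7}{60}\not\le\tfrac{3}{40}$), so your more conservative $\tfrac{1}{12}$-per-tail bound is the safer route here, even though the paper's final conclusion $J_m(x)<2\log 2$ is still recoverable once the actual (favorable) signs for even $m,n$ are used.
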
 
	\textit{Proof.}\\ 
	The proof is analogous to the previous cases. We set $n=m+x$ for a natural number $x>0$. The difference of eigenvalues is
	\begin{align}
		\Delta^{S^{(2)}} _m(x)&=\frac{\gamma_{0{m+x}}^{S^{(2)}}-\gamma_{0m}^{S^{(2)}}}{\beta_0}\nonumber\\
		&= \frac{2}{3}\left(2\sum_{k = m+1}^{m+x}\frac{1}{k}+\frac{2(-1)^{m+x}}{(m+x+2)(m+x+1)(m+x)}-\frac{2(-1)^m}{(m+2)(m+1)m}\right)\nonumber\\
		&=\frac{2}{3}J_m(x)
	\end{align}
	with
	{\small
		\begin{align}\label{eq:si4}
			J_m(x) =2\sum_{k = m+1}^{m+x}\frac{1}{k}+\frac{2(-1)^{m+x}}{(m+x+2)(m+x+1)(m+x)}-\frac{2(-1)^m}{(m+2)(m+1)m}\,.
		\end{align}
	}
	This sum is of the form in Eq. \eqref{eq:general}. Setting $x=m+t$ for $t\geq0$, we find
	\begin{align}
		\Delta_m^{S^{(2)}} (m+t)&=\frac{2}{3}\left(2\sum_{k=m+1}^{2m+t}\frac{1}{k}+\frac{2(-1)^{2m+t}}{(2m+t+2)(2m+t+1)(2m+t)}-\frac{2(-1)^m}{(m+2)(m+1)m}\right)\,.
	\end{align}
	The generalized argument applies to a prime in the interval $m+1+\frac{t}{2}\leq p\leq 2m+t$. Since the denominators of the fractional terms are polynomials in $m$ and $t$, they will be coprime with $p$ for sufficiently large $m$. It can be verified that $\Delta_m^{S^{(2)}} (m+t)$ is never an integer.\par
	For the remaining values $x<m$, we  demonstrate in Proposition \ref{prop:J-upper} the bound
	\begin{align}
\label{eq:ineq24}
		J_m(x)<2\log(2)\,,\qquad \qquad\forall x<m\,.
	\end{align}	
	Equation \eqref{eq:ineq24} then implies
	\begin{align}
		\Delta_m^{S^{(2)}}(x)= 	\frac{2}{3}J_m(x) <\frac{4}{3}\log(2)<0.93\,\qquad\forall x<m\,.
	\end{align}
	Therefore, $\Delta_m^{S^{(2)}}$ cannot be an integer for $x\geq m$ or for $x<m$, which proves the theorem.
	\qed\\\\
	From Eq. \eqref{eq:otherbalanced}, we conclude that all other anomalous dimensions of balanced operators of even spin also satisfy the nonresonant condition.
	
	\subsection{Nonresonant condition for balanced twist-$2$ operators of odd spin}
	We now address the anomalous dimension of odd-spin balanced operators. First, for $S^{(1)}_s$,
	{\small
		\begin{align}
			\gamma^{S^{(1)}}_{0n}
			= \frac{2}{(4 \pi)^2}  \left(2 H_{n-2}+\frac{2}{n-1}-\frac{1}{n}+\frac{2}{n+1}-\frac{3}{2}  \right)
		\end{align}
	}
	with $n=3,5,7,\ldots$.
	\begin{lemma}\label{lemma3}
		The sequence $\gamma^{S^{(1)}}_{0n}$ is monotonically increasing
		\begin{equation}
			\gamma^{S^{(1)}}_{0n+1}\geq \gamma^{S^{(1)}}_{0n}
		\end{equation}
	\end{lemma}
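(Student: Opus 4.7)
The plan is to mirror the strategy used in Lemmas \ref{lemma:1}, \ref{lemma:11}, \ref{lemma2}, and \ref{lemma24}: compute the telescoping difference $\gamma^{S^{(1)}}_{0,n+1}-\gamma^{S^{(1)}}_{0n}$ using the closed-form expression just displayed, and exhibit it as a manifestly positive rational combination of $1/(n+j)$ terms.

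Concretely, I would start by applying the identity $H_{n-1}-H_{n-2}=\frac{1}{n-1}$, so that the harmonic piece contributes $\frac{2}{n-1}$ under the shift $n\to n+1$. The remaining rational terms shift as $\frac{2}{n-1}\to \frac{2}{n}$, $-\frac{1}{n}\to -\frac{1}{n+1}$, and $\frac{2}{n+1}\to\frac{2}{n+2}$, while the constant $-3/2$ cancels out. Collecting, the difference becomes
\begin{equation}
\gamma^{S^{(1)}}_{0,n+1}-\gamma^{S^{(1)}}_{0n}=\frac{2}{(4\pi)^2}\left(\frac{2}{n-1}-\frac{2}{n-1}+\frac{2}{n}+\frac{1}{n}-\frac{1}{n+1}-\frac{2}{n+1}+\frac{2}{n+2}\right),
\end{equation}
which simplifies to
\begin{equation}
\gamma^{S^{(1)}}_{0,n+1}-\gamma^{S^{(1)}}_{0n}=\frac{2}{(4\pi)^2}\left(\frac{3}{n(n+1)}+\frac{2}{n+2}\right).
\end{equation}
Both summands are strictly positive for $n\geq 3$, so the difference is positive, completing the proof.

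Although the lemma pertains to odd-spin operators only, exactly as in Lemmas \ref{lemma2} and \ref{lemma24} I would interpret the closed formula as defining a natural continuation to all integer $n$, so that the inequality $\gamma^{S^{(1)}}_{0,n+1}\geq\gamma^{S^{(1)}}_{0n}$ makes sense and, a fortiori, the odd subsequence $n=3,5,7,\ldots$ is strictly increasing, matching the ordering of Eq. \eqref{eq:nonresonant2}. There is no genuine obstacle here: the only thing to be careful about is the algebraic bookkeeping of the $\pm 1/(n+j)$ contributions and the cancellation of the $2/(n-1)$ pair coming from the harmonic shift and the explicit rational term. This is a pure computation with no combinatorial or number-theoretic subtlety, in contrast to the nonresonant theorem that will follow.
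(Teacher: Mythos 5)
Your proof is correct and uses the same direct strategy as the paper: telescope the difference $\gamma^{S^{(1)}}_{0,n+1}-\gamma^{S^{(1)}}_{0n}$ via $H_{n-1}-H_{n-2}=\tfrac{1}{n-1}$ and collect the rational terms into a manifestly positive expression. One remark worth flagging: your closed form
\begin{equation}
\gamma^{S^{(1)}}_{0,n+1}-\gamma^{S^{(1)}}_{0n}=\frac{2}{(4\pi)^2}\left(\frac{3}{n(n+1)}+\frac{2}{n+2}\right)
\end{equation}
differs from the expression displayed in the paper's proof, $\frac{2}{(4\pi)^2}\bigl(\tfrac{2}{n-1}+\tfrac{2}{n}-\tfrac{2}{n+1}-\tfrac{2}{n+2}\bigr)$; a spot check at $n=3$ (direct evaluation from the closed formula gives $\gamma^{S^{(1)}}_{04}-\gamma^{S^{(1)}}_{03}=\tfrac{2}{(4\pi)^2}\cdot\tfrac{13}{20}$, matching yours, whereas the paper's display gives $\tfrac{2}{(4\pi)^2}\cdot\tfrac{23}{30}$) confirms that your algebra is the correct one and the paper's display contains a slip. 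Since both expressions are positive, the lemma's conclusion is unaffected in either case, but your version is the one that actually follows from the stated formula for $\gamma^{S^{(1)}}_{0n}$.
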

	\textit{Proof.}\\
	The explicit difference is
	\begin{align}
		\gamma^{S^{(1)}}_{0n+1}- \gamma^{S^{(1)}}_{0n}
		& = \frac{2}{(4\pi)^2}\left(\frac{2}{n-1}+\frac{2}{n}-\frac{2}{n+1}-\frac{2}{n+2}\right)>0\,.
	\end{align}
	Thus, $\gamma^{S^{(1)}}_{0n}$ is a monotonically increasing sequence, matching the ordering in Eq. \eqref{eq:nonresonant2}.\\
	\qed\\\\
	
	\begin{theorem}
		The eigenvalues of $\frac{\gamma^{S^{(1)}}_0}{\beta_0}$ are nonresonant
		\begin{equation}
			\frac{\gamma_{0n}^{S^{(1)}}-\gamma_{0m}^{S^{(1)}}}{\beta_0} \neq 2k\,,\qquad k\in\mathbb{N}^+,\quad \forall n>m\geq 3\,,
		\end{equation}
		where $\beta_0 = \frac{3}{(4\pi)^2}$.
	\end{theorem}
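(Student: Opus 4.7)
The strategy mirrors that of Theorem~\ref{th:2}, with additional care because the odd-spin expression for $\gamma^{S^{(1)}}_{0n}$ carries more correction terms than its even-spin counterpart. I would set $n=m+x$ with $x$ a positive integer and, using Eq.~\eqref{gammaH}, rewrite the difference as $\Delta_m^{S^{(1)}}(x)=\tfrac{2}{3}L_m(x)$ with
\begin{align}
L_m(x) = 2\sum_{k=m-1}^{m+x-2}\frac{1}{k} + \frac{2}{m+x-1} - \frac{1}{m+x} + \frac{2}{m+x+1} - \frac{2}{m-1} + \frac{1}{m} - \frac{2}{m+1}.
\end{align}
Collecting like terms, $L_m(x)$ fits the template of Eq.~\eqref{eq:general} with coefficients $c_{m-1}=c_{m+1}=0$, $c_m=3$, $c_k=2$ for $m+2\leq k\leq m+x-1$, $c_{m+x}=-1$, $c_{m+x+1}=2$. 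A short telescoping manipulation also yields the compact identity
\begin{align}\label{eq:L-compact-proposal}
L_m(x) = \frac{3x}{m(m+x)} + 2\bigl(H_{m+x+1}-H_{m+1}\bigr),
\end{align}
which I would use for the small-$x$ regime.

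As in the previous theorems I would split the argument at $x=m$. For $x\geq m$, setting $x=m+t$ with $t\geq 0$, Bertrand's postulate supplies a prime $p$ in the interval
\begin{align}
m+\frac{t+3}{2}\leq p\leq 2m+t+1.
\end{align}
For every $m\geq 3$ and $t\geq 0$ this forces $p\geq 5$, and in particular $p>m+1$, so $c_p\in\{2,-1\}$ and $\gcd(c_p,p)=1$. The generalized argument of Section~\ref{sec:generalizedsec} then gives $\nu_p(L_m(m+t))=-1$, and since $\nu_p(2)=\nu_p(3)=0$ one obtains $\nu_p(\Delta_m^{S^{(1)}}(m+t))=-1<0$, ruling out any positive-integer value. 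Because the Bertrand prime is automatically $\geq 5$ whenever $m\geq 3$, no ad hoc small-$m$ exceptions need to be checked in this regime.

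For $x<m$ I would prove, as a Proposition analogous to Propositions~\ref{prop:K} and~\ref{prop:J-upper}, the bound $L_m(x)<2\log 2$ for all $x<m$, which immediately gives $\Delta_m^{S^{(1)}}(x)<\tfrac{4}{3}\log 2<0.93<1$. Starting from Eq.~\eqref{eq:L-compact-proposal}, I would bound the harmonic tail via
\begin{align}
H_{m+x+1}-H_{m+1} \leq H_{2m}-H_{m+1} = (H_{2m}-H_m) - \frac{1}{m+1} < \log 2 - \frac{1}{m+1},
\end{align}
and control the polynomial piece using the elementary inequality
\begin{align}
\frac{3x}{m(m+x)} < \frac{2}{m+1}\qquad \forall\, x\leq m-1,\ m\geq 3,
\end{align}
which rearranges to $(m-1)^2+2>0$. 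The $\pm 2/(m+1)$ contributions cancel when the two bounds are summed, leaving $L_m(x)<2\log 2$.

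The main obstacle is choosing the right algebraic form of $L_m(x)$. The expanded version is sharp enough for the $p$-adic half of the proof but too loose for the $x<m$ estimate, where naive bounds of the form $L_m(x)<2\log 2+\tfrac{3}{m}$ fail already at $m=3$. The compact form \eqref{eq:L-compact-proposal} is what makes the small-$x$ estimate work: the polynomial correction $3x/(m(m+x))$ is bounded by precisely the $1/(m+1)$ that is discarded from the harmonic sum, furnishing the slack needed to stay below $2\log 2<3/2$.
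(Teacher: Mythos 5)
Your proof is correct, and it actually improves on the paper's argument in two respects. First, your expanded form of $L_m(x)$ is the accurate one derived from the stated odd-spin expression for $\gamma^{S^{(1)}}_{0n}$, whereas the paper writes the correction terms as $\tfrac{2(-1)^{m+x-1}}{(m+x)(m+x-1)}-\tfrac{2(-1)^{m-1}}{m(m-1)}$, which does not match the partial-fraction expansion of $\tfrac{2}{n-1}-\tfrac{1}{n}+\tfrac{2}{n+1}$; your coefficient bookkeeping ($c_{m-1}=c_{m+1}=0$, $c_m=3$, bulk $c_k=2$, $c_{m+x}=-1$, $c_{m+x+1}=2$) is right. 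Second, and more importantly, your telescoping identity $L_m(x)=2\bigl(H_{m+x+1}-H_{m+1}\bigr)+\tfrac{3x}{m(m+x)}$ is new relative to the paper and lets you prove the sharper bound $L_m(x)<2\log 2$ for $x<m$, hence $\Delta_m^{S^{(1)}}(x)<\tfrac{4}{3}\log 2<1$. The paper instead proves the weaker $L_m(x)<2\log 2+\tfrac{3}{2}$ (Proposition~\ref{prop:L}), yielding $\Delta_m^{S^{(1)}}(x)<1.93$, which is enough to rule out the even integers $2k$ required by the nonresonance condition but does not exclude $\Delta=1$. Your bound rules out all positive integers, a cleaner conclusion. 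For the regime $x\geq m$ your treatment is also tighter than the paper's: by computing the Bertrand interval $m+\tfrac{t+3}{2}\leq p\leq 2m+t+1$ explicitly and noting that $m\geq 3$, $t\geq 0$ already force $p\geq 5$ and $p>m+1$, you land on a denominator with $c_p\in\{2,-1\}$ coprime to $p$, with no leftover small-$m$ cases to check by hand, whereas the paper resorts to "a direct check handles the few remaining small $m$ cases." One minor caveat: the hypothesis $n>m\geq 3$ with $n,m$ odd means $x=n-m$ is even, so strictly you only need the argument for even $x$; your proof covers all $x\geq 1$, which is harmless but slightly more than required.
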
 
	\textit{Proof.}\\
	Following the established procedure, for $n=m+x$ with $x>0$ a natural number, the difference of eigenvalues is
	\begin{align}
		\Delta_m^{S^{(1)}}(x)&=\frac{2}{3}\left(2\sum_{k = m-1}^{m+x-2}\frac{1}{k}+\frac{2(-1)^{m+x-1}}{(m+x)(m+x-1)}-\frac{2(-1)^{m-1}}{m(m-1)}\right)
		=\frac{2}{3}	L_m(x)\,.
	\end{align}
	Setting $x=m+t$ for $t\geq0$ leaves a finite number of cases $x<m$. The generalized argument applies to the case $x \geq m$, confirming non-integrality for large enough $m$. A direct check handles the few remaining small $m$ cases.\par
	For $x<m$, 	we show in Proposition \ref{prop:L} that the following bound holds
	\begin{align}
		\label{emtilde_rephrased}
		L_m(x) <2\log(2)+\frac{3}{2},\qquad\qquad \forall x<m\,.
	\end{align}	
	This implies that
	\begin{align}
		\Delta_m^{S^{(1)}}(x)= 	\frac{2}{3}L_m(x) <\frac{4}{3}\log(2)+1<1.93\,,\qquad\qquad\forall x<m\,.
	\end{align}
	We conclude that $\Delta_m^{S^{(1)}}$ cannot be an integer greater then $1$ and in particular it cannot be an even integer, thus proving the nonresonance condition.
	\qed\\\\
	Finally, for the operators $S^{(2)}_s$ of odd spin,
	{\small
		\begin{align}
			\gamma^{S^{(2)}}_{0n}
			= \frac{2}{(4 \pi)^2}  \left(2H_{n+1}-\frac{2(-1)^n}{(n+2)(n+1)}-\frac{3}{2}\right)
		\end{align}
	}
	with $n=1,3,5,\ldots$.
	\begin{lemma}\label{lemma33}
		The sequence $\gamma^{S^{(2)}}_{0n}$ is monotonically increasing
		\begin{equation}
			\gamma^{S^{(2)}}_{0n+1}\geq \gamma^{S^{(2)}}_{0n}
		\end{equation}
	\end{lemma}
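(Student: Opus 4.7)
\textit{Proof proposal.}\\
The plan is to mirror the calculation used in Lemmas \ref{lemma2}, \ref{lemma24} and \ref{lemma3}: compute the difference $\gamma^{S^{(2)}}_{0\,n+1}-\gamma^{S^{(2)}}_{0\,n}$ directly from the closed-form expression, simplify, and check positivity for the relevant range of $n$ (here $n=1,3,5,\ldots$).

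First I would substitute $n\mapsto n+1$ in the given formula and exploit two elementary identities: $H_{n+2}-H_{n+1}=\tfrac{1}{n+2}$ for the harmonic part, and $(-1)^{n+1}=-(-1)^n$ to re-sign the oscillating piece so that the two $\pm\frac{2(-1)^{n}}{\cdots}$ contributions add rather than cancel. After combining the common factor $\tfrac{2(-1)^n}{n+2}$ in the oscillating piece, the difference collapses to
\begin{equation}
\gamma^{S^{(2)}}_{0\,n+1}-\gamma^{S^{(2)}}_{0\,n}=\frac{2}{(4\pi)^2}\left(\frac{2}{n+2}+\frac{4(-1)^n}{(n+1)(n+3)}\right),
\end{equation}
using $(n+1)+(n+3)=2(n+2)$ to cancel the $n+2$ in the denominator of the oscillating term.

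For odd $n$ the oscillating term is negative, so positivity is not automatic and this is the only nontrivial step. I would reduce the inequality to $2(n+2)^{-1}>4[(n+1)(n+3)]^{-1}$, i.e.\ $(n+2)^2-1>2(n+2)$, equivalently $(n+2-1)^2>2$, which holds for every $n\geq 1$. Hence the difference is strictly positive on the odd integers, establishing monotonicity. I do not expect any genuine obstacle here; the only item requiring care is the sign bookkeeping in the oscillating term, since an error there would flip the inequality. Once positivity is established, the sequence $\gamma^{S^{(2)}}_{0n}$ for odd $n$ is monotonically increasing, matching the ordering required by Eq.~\eqref{eq:nonresonant2}.
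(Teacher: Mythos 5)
Your proposal is correct and follows the same strategy as the paper: subtract consecutive terms of the closed form and verify positivity. I checked that the collapsed form $\frac{2}{(4\pi)^2}\bigl(\frac{2}{n+2}+\frac{4(-1)^n}{(n+1)(n+3)}\bigr)$ is indeed the exact $n\to n+1$ difference of the stated expression (the harmonic part telescopes to $\frac{2}{n+2}$, and the two oscillating pieces combine via $(n+1)+(n+3)=2(n+2)$), and the reduction of the nontrivial odd-$n$ case to $(n+1)^2>2$ is correct for all $n\ge1$; for even $n$ the oscillating term is already positive, so positivity holds across the board. One remark: the difference displayed in the paper's proof contains the terms $\frac{2}{n+1}$ and $\frac{2(-1)^n}{(n+1)n}$, which do not arise from a literal $n\to n+1$ substitution into the stated closed form (the correct terms are $\frac{2}{n+2}$ alone and $\frac{2(-1)^n}{(n+2)(n+1)}$), so your simplification is actually the cleaner and more directly verifiable one; both expressions are positive, so the lemma's conclusion is unaffected.
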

	\textit{Proof.}\\
	The difference is
	\begin{align}
		\gamma^{S^{(2)}}_{0n+1}- \gamma^{S^{(2)}}_{0n}
		& = \frac{2}{(4\pi)^2}\left(\frac{2}{n+1}+\frac{2}{n+2} -\frac{2(-1)^{n+1}}{(n+3)(n+2)} + \frac{2(-1)^{n}}{(n+1)n}\right)>0\,.
	\end{align}
	Thus, $\gamma^{S^{(2)}}_{0n}$ is monotonically increasing and aligns with the ordering in Eq. \eqref{eq:nonresonant2}.\\
	\qed\\\\
	
	\begin{theorem}
		The eigenvalues of $\frac{\gamma^{S^{(2)}}_0}{\beta_0}$ are nonresonant
		\begin{equation}
			\frac{\gamma_{0n}^{S^{(2)}}-\gamma_{0m}^{S^{(2)}}}{\beta_0} \neq 2k\,,\qquad k\in\mathbb{N}^+,\quad \forall n>m\geq 1\,,
		\end{equation}
		where $\beta_0 = \frac{3}{(4\pi)^2}$.
	\end{theorem}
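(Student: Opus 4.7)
The plan is to mirror the structure of the preceding Theorem \ref{th:24} for even-spin $S^{(2)}$, adapting the computation to the odd-spin anomalous dimension. First, I would set $n = m+x$ with $x\geq 1$ a natural number and, using $H_{m+x+1}-H_{m+1} = \sum_{k=m+2}^{m+x+1} 1/k$, express
\begin{equation*}
\Delta^{S^{(2)}}_m(x) = \frac{2}{3}\widetilde{J}_m(x)\,,\qquad
\widetilde{J}_m(x) = 2\sum_{k = m+2}^{m+x+1}\frac{1}{k} -\frac{2(-1)^{m+x}}{(m+x+2)(m+x+1)}+\frac{2(-1)^m}{(m+2)(m+1)}\,.
\end{equation*}
This is again a sum of the form in Eq.\ \eqref{eq:general}, with harmonic coefficients $\pm 2$ and two rational tail terms whose denominators are products of two consecutive integers in $m$ and $t$.

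I would then split the analysis into the regimes $x\geq m$ and $x<m$. For $x\geq m$, setting $x=m+t$ with $t\geq 0$, the harmonic sum ranges over $k\in[m+2,\,2m+t+1]$. Bertrand's postulate produces a prime $p$ satisfying $m+(t+3)/2\leq p\leq 2m+t+1$, and the generalized standard argument of Section \ref{sec:generalizedsec} yields $\nu_p$ of the harmonic piece equal to $-1$. For all but finitely many $m$, the tail denominators $(m+1)(m+2)$ and $(2m+t+1)(2m+t+2)$ are coprime to $p$, so $\nu_p(\widetilde{J}_m(m+t))=-1$; since $p\geq 5$, this gives $\nu_p(\Delta^{S^{(2)}}_m(m+t))=-1<0$, ruling out integrality. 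The remaining small-$m$ exceptions (e.g.\ $m=1$ with small $t$) would be handled by direct evaluation, exactly as in the preceding theorems.

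For $x<m$, since $S^{(2)}_s$ exists only at odd $s$, both $m$ and $m+x$ are odd, so $(-1)^m=(-1)^{m+x}=-1$ and the tail contribution $\tfrac{2}{(m+x+2)(m+x+1)}-\tfrac{2}{(m+2)(m+1)}$ is strictly negative for $x\geq 1$. This gives
\begin{equation*}
\widetilde{J}_m(x) < 2\sum_{k=m+2}^{m+x+1}\frac{1}{k} \leq 2(H_{2m}-H_{m+1}) < 2\log 2\,,
\end{equation*}
so $\Delta^{S^{(2)}}_m(x)<\tfrac{4}{3}\log 2<0.93<1$, which rules out integrality. A proposition analogous to \ref{prop:J-upper} would formalize the bound $H_{2m}-H_{m+1}<\log 2$, which follows from the monotonicity of $H_{2m}-H_m$ and its limit $\log 2$.

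The main obstacle is the boundary case $p=2m+t+1$ (and, more rarely, $p=m+2$), where a tail denominator is divisible by $p$ and that tail contributes $\nu_p=-1$, potentially cancelling the $-1$ from the harmonic piece. A direct residue calculation resolves this: the relevant combination $\tfrac{2}{p}+\tfrac{\pm 2}{p(p+1)}=\tfrac{2(p+1)\pm 2}{p(p+1)}$ has numerator with $\nu_p=0$ for $p\geq 3$, so no cancellation actually occurs, and $\nu_p$ remains $-1$. For $m$ large, one may alternatively choose a different prime in the Bertrand window when more than one exists, and the few genuinely exceptional small-$m$ situations are settled by explicit computation, completing the proof.
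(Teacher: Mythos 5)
Your proof follows the same overall route as the paper: set $n=m+x$, express the eigenvalue difference as $\tfrac{2}{3}$ of a sum with a harmonic block and two rational tails (your $\widetilde{J}_m$ is exactly the paper's $U_m$), handle $x\ge m$ via Bertrand's postulate and the $p$-adic order, and bound by $2\log 2$ for $x<m$. For $x<m$ your argument is in fact cleaner than the paper's Proposition~\ref{prop:U}: exploiting that $m$ and $m+x$ are both odd, so $(-1)^m=(-1)^{m+x}=-1$, you observe the tail $\tfrac{2}{(m+x+2)(m+x+1)}-\tfrac{2}{(m+2)(m+1)}$ is strictly negative for $x\ge1$, and the elementary estimate $2(H_{2m}-H_{m+1})<2\log 2$ suffices. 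The paper instead majorizes with both rational terms positive (a configuration excluded by the parity you noted) and must absorb them against the slack $2\log\bigl(1+x/(m+1)\bigr)\le 2\log 2-1/m$; your parity observation bypasses that trade-off entirely.

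There is, however, a slip in the $x\ge m$ boundary analysis. In your residue check $\tfrac{2}{p}+\tfrac{\pm 2}{p(p+1)}=\tfrac{2(p+1)\pm 2}{p(p+1)}$, the minus sign gives numerator $2p$, which has $\nu_p=1$ rather than $0$, so that combination has $\nu_p=0$ and the harmonic $-1$ \emph{would} in fact be cancelled; the blanket claim that the numerator has $\nu_p=0$ for both signs is false. You are rescued by the same parity fact you used for $x<m$: since $m$ and $m+x$ are both odd, $m+x+1$ is even and cannot be a prime $\ge 3$, so the boundary $p=m+x+1$ is vacuous. The only live boundary is $p=m+2$ (which is odd), where the tail has denominator $p(p-1)$, not $p(p+1)$, and $\tfrac{2}{p}-\tfrac{2}{p(p-1)}=\tfrac{2(p-2)}{p(p-1)}$ does carry $\nu_p=-1$ for $p\ge5$. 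The conclusion is correct, but the $\pm$ bookkeeping as written is wrong and should be replaced by the parity argument, which you already have in hand.
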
 
	\textit{Proof.}\\
	The proof strategy remains the same. The difference of eigenvalues
	\begin{align}
		\Delta_m^{S^{(2)}}(x)&=\frac{2}{3}\left(2\sum_{k = m+2}^{m+x+1}\frac{1}{k}-\frac{2(-1)^{m+x}}{(m+x+2)(m+x+1)} + \frac{2(-1)^m}{(m+2)(m+1)}\right)\nonumber\\
		&=\frac{2}{3}U_m(x)\,.
	\end{align}
	is not an integer for $x \geq m$ by the generalized argument. For the remaining cases $x<m$, we show in Proposition \ref{prop:U} the bound holds
	\begin{align}
		\label{emtilde3_rephrased}
		U_m(x) <2\log(2),\qquad\qquad \forall x<m\,.
	\end{align}	
	This implies
	\begin{align}
		\Delta_m^{S^{(2)}}(x)= 	\frac{2}{3}U_m(x) <\frac{4}{3}\log(2)<0.93\,,\qquad\qquad\forall x<m\,.
	\end{align}
	We conclude that $\Delta_m^{S^{(2)}}$ cannot be an integer, which completes the proof.
	\qed\\\\
	As in the even-spin case, Eq. \eqref{eq:otherbalanced} implies that all other anomalous dimensions of balanced operators of odd spin satisfy the nonresonant condition.
	\section{Conclusions}
	We have shown that the eigenvalues of the (diagonal) matrices $\frac{\gamma_0}{\beta_0}$ for the twist-$2$ operators in SUSY $\mathcal{N}=1$ SU($N$) Yang-Mills theory satisfy the nonresonant condition in \textbf{Theorem \ref{th:1}}. Consequently, a nonresonant diagonal scheme exists for all twist-$2$ operators in this theory, wherein the renormalized mixing matrices $Z(\lambda)$ from Eq. \eqref{ZZ} are one-loop exact with eigenvalues \cite{Bochicchio:2021geometry}
	\begin{equation}
		Z_{\mathcal{O}_i}(\lambda) = \Bigg(\frac{g(\mu)}{g(\frac{\mu}{\lambda})}\Bigg)^{\frac{\gamma_{0\mathcal{O}_i}}{\beta_0}} \,.
	\end{equation}
	It can be concluded that the UV asymptotics of the generating functional for correlators, as computed in \cite{BPS41,BPS42}, is applicable to all twist-$2$ operators within SUSY $\mathcal{N}=1$ SU($N$) Yang-Mills theory. 
	\section*{Acknowledgments}
	I wish to thank Marco Bochicchio for his review of the manuscript.

	\appendix
	
	\section{Bounds on sums}
	
	\subsection{Integral sandwich lemma}
	
	\begin{lemma}\label{lem:integral}
	
	Let $f:[a-1,b+1]\to\mathbb{R}$ be decreasing and $a,b\in\mathbb{Z}$ with $a\le b$. Then
	\begin{equation}\label{eq:sandwich}
	\int_{a}^{\,b+1} f(x)\,dx \le \sum_{k=a}^{b} f(k) \le \int_{a-1}^{\,b} f(x)\,dx.
	\end{equation}
	\end{lemma}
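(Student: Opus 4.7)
The plan is to establish both inequalities by decomposing each integral into unit subintervals and exploiting the monotonicity of $f$ on each one. The key observation is that for a decreasing function $f$ and any integer $k$ with $[k,k+1]\subseteq[a-1,b+1]$, we have $f(k+1)\le f(x)\le f(k)$ for every $x\in[k,k+1]$; integrating over $[k,k+1]$ yields the pointwise pair of bounds
\begin{equation}
f(k+1)\;\le\;\int_{k}^{k+1} f(x)\,dx\;\le\;f(k).
\end{equation}

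For the lower bound on the sum, I would split the integral on the left-hand side of \eqref{eq:sandwich} as $\int_{a}^{b+1} f(x)\,dx=\sum_{k=a}^{b}\int_{k}^{k+1} f(x)\,dx$, and use the upper half of the above display, namely $\int_{k}^{k+1}f(x)\,dx\le f(k)$, term by term. Summing over $k=a,\ldots,b$ gives precisely $\int_{a}^{b+1} f(x)\,dx\le\sum_{k=a}^{b} f(k)$.

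For the upper bound on the sum, I would shift the decomposition by one and write $\int_{a-1}^{b} f(x)\,dx=\sum_{k=a}^{b}\int_{k-1}^{k} f(x)\,dx$. On each subinterval $[k-1,k]$ monotonicity gives $f(x)\ge f(k)$, hence $\int_{k-1}^{k} f(x)\,dx\ge f(k)$; summing over $k=a,\ldots,b$ produces $\sum_{k=a}^{b} f(k)\le \int_{a-1}^{b} f(x)\,dx$.

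There is no real obstacle here; the only point requiring mild care is verifying that every subinterval used lies inside the domain of definition $[a-1,b+1]$, which is immediate from the stated hypotheses $a\le b$ and integer endpoints. It is also worth noting the edge case $a=b$, where the sum reduces to $f(a)$ and the inequalities collapse to $\int_{a}^{a+1} f(x)\,dx\le f(a)\le\int_{a-1}^{a} f(x)\,dx$, which is just the pointwise estimate applied once.
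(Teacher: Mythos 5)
Your argument is correct and is essentially the paper's proof: both decompose the integrals into unit subintervals, use monotonicity to bound $\int_k^{k+1}f$ by $f(k)$ from above and $\int_{k-1}^k f$ by $f(k)$ from below, and sum over $k=a,\dots,b$. The extra remarks on domain inclusion and the $a=b$ edge case are fine but not new ideas.
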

	
	\begin{proof}
	For each $k\in\{a,\dots,b\}$ and $x\in[k,k+1]$, decreasingness gives $f(x)\le f(k)$, from which we get 
	\begin{equation}
	\int_{k}^{k+1} f(x)\,dx \le f(k)\int_{k}^{k+1} \,dx\le f(k).
	\end{equation}
	Summing over $k=a,\dots,b$ yields the left inequality in \eqref{eq:sandwich}.
	Similarly, for $x\in[k-1,k]$ we have $f(x)\ge f(k)$, so
	\begin{equation}
	\int_{k-1}^{k} f(x)\,dx \ge f(k)\,,
	\end{equation}
	and summing over $k=a,\dots,b$ gives the right inequality.
	\end{proof}
	
	\begin{clr}\label{cor:harmonic-blocks}
	
	Taking $f(x)=1/x$ (decreasing on $(0,\infty)$) in \eqref{eq:sandwich} gives, for integers $1\le a\le b$,
	\begin{equation}\label{eq:harmonic-sandwich}
	\log\frac{b+1}{a}\le\sum_{k=a}^{b}\frac{1}{k}\le\log\frac{b}{a-1}.
	\end{equation}
	\end{clr}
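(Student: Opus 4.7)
The proof is a direct specialization of Lemma \ref{lem:integral} with the choice $f(x)=1/x$. My plan is first to verify the hypothesis of the lemma: the function $1/x$ is smooth and strictly decreasing on $(0,\infty)$, so whenever $a\ge 2$ we have $a-1\ge 1>0$ and hence $[a-1,b+1]\subset(0,\infty)$, which is exactly the domain condition required by the lemma. Under this hypothesis the sandwich inequality \eqref{eq:sandwich} instantiates to
\begin{equation*}
\int_{a}^{b+1}\frac{dx}{x}\;\le\;\sum_{k=a}^{b}\frac{1}{k}\;\le\;\int_{a-1}^{b}\frac{dx}{x}.
\end{equation*}

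The remaining step is to evaluate the two integrals using the antiderivative $\log x$. The left integral produces $\log(b+1)-\log a=\log\frac{b+1}{a}$ and the right one gives $\log b-\log(a-1)=\log\frac{b}{a-1}$, which is precisely the claim \eqref{eq:harmonic-sandwich}.

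The only delicate point is the boundary case $a=1$, where the interval $[a-1,b+1]=[0,b+1]$ contains the singularity of $1/x$ and the upper bound formally reads $\log\frac{b}{0}$. With the natural convention $\log\frac{b}{0}=+\infty$ the upper inequality is trivially valid, while the lower bound $\log(b+1)\le\sum_{k=1}^{b}\frac{1}{k}$ still follows from the proof of the left inequality in Lemma \ref{lem:integral}, which only uses the restriction of $f$ to $[a,b+1]=[1,b+1]\subset(0,\infty)$. I do not anticipate any genuine obstacle: the corollary is essentially a one-line specialization of the lemma, and the only care needed is the edge case at $a=1$, which is handled by the conventions just described.
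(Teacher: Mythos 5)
Your proof is correct and takes the same route as the paper: specialize Lemma \ref{lem:integral} to $f(x)=1/x$ and evaluate the resulting integrals via the antiderivative $\log x$. Your additional care about the edge case $a=1$ (where the upper bound degenerates to $\log\frac{b}{0}$ and the lemma's domain hypothesis on $[a-1,b+1]$ fails) is a genuine observation that the paper's one-line proof glosses over, and your resolution—noting that the lower bound only uses $f$ on $[a,b+1]$ and treating the upper bound as vacuously true—is the right fix.
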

	
	\begin{proof}
	Integrate $\frac{1}{x}$ to obtain $\int_{u}^{v} \frac{dx}{x}=\log v-\log u$.
	\end{proof}

	\subsection{Upper bound for $\Sigma_m(x)$ with $x<m$}
	
	\begin{prop}\label{prop:Sigma-main}
	For all $m\ge2$ and $x<m$
	\begin{equation}
			\Sigma_m(x) = \sum_{k = m+1}^{m+x}\frac{1}{k}< \log 2,
	\end{equation}
	\end{prop}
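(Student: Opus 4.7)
\textbf{Proof proposal for Proposition \ref{prop:Sigma-main}.}
The plan is to reduce the bound to a single worst case in $x$ and then apply the harmonic sandwich inequality from Corollary \ref{cor:harmonic-blocks}. The key observation is that $\Sigma_m(x)$ is strictly increasing in $x$ for fixed $m$, because passing from $x$ to $x+1$ adds the positive term $1/(m+x+1)$. Hence over the range $1 \le x < m$ (the range $x = 0$ being trivial since $\Sigma_m(0) = 0 < \log 2$), the supremum is attained at $x = m-1$, and it suffices to show
\begin{equation}
\Sigma_m(m-1) \;=\; \sum_{k=m+1}^{2m-1}\frac{1}{k} \;<\; \log 2.
\end{equation}

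Next, I would apply Corollary \ref{cor:harmonic-blocks} with $a = m+1$ and $b = 2m-1$ (note $1 \le a \le b$ since $m \ge 2$). The right inequality in \eqref{eq:harmonic-sandwich} gives
\begin{equation}
\sum_{k=m+1}^{2m-1}\frac{1}{k} \;\le\; \log\frac{2m-1}{(m+1)-1} \;=\; \log\!\left(2 - \frac{1}{m}\right) \;<\; \log 2,
\end{equation}
the last inequality being immediate since $\log$ is strictly increasing and $2 - 1/m < 2$ for every $m \ge 2$. Chaining the two displayed inequalities yields $\Sigma_m(x) \le \Sigma_m(m-1) < \log 2$ for all $x < m$, which is the claim.

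I do not anticipate any real obstacle: the only subtlety is to recognize that one must not write the sum as $\sum_{k=m+1}^{2m} 1/k$ (which would only yield the weak bound $\log 2$ nonstrictly, or even fail after applying the upper sandwich), and instead use that $x < m$ strictly, so that the last index is at most $2m-1$. This strict inequality $x \le m-1$ is precisely what produces the strict gap between $\log(2 - 1/m)$ and $\log 2$. No case analysis on small $m$ is required, because the bound already holds for $m = 2$, where $\Sigma_2(1) = 1/3 < \log 2$.
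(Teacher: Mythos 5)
Your proof is correct and is essentially the paper's own argument: the paper likewise applies Corollary \ref{cor:harmonic-blocks} with $a=m+1$, $b=m+x$ to get $\Sigma_m(x)\le \log(1+x/m)\le \log 2$ directly, which is exactly your bound $\log(2-1/m)$ once you substitute the worst case $x=m-1$. Your intermediate monotonicity-in-$x$ reduction is a harmless extra step (and echoes the paper's ``alternative route''), but the core mechanism — the integral sandwich from Corollary \ref{cor:harmonic-blocks} — is identical.
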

	
	\begin{proof}
	From Eq. \eqref{eq:harmonic-sandwich},
	\begin{equation}
	\Sigma_m(x) \le \log\left(1+\frac{x}{m}\right)
	\le \log 2 
	\end{equation}
	since $x<m$. 
	\end{proof}
	\subsubsection{An alternative route}
	By noticing that the anomalous dimensions are all monotonically increasing functions we immediately establish that for $x<m$
	\begin{equation}
\Sigma_m(x)<\Sigma_m(m-1)\,.
	\end{equation}
	Now, as it was noticed in \cite{S1}, we show that also $\Sigma_m(m-1)$ is montonic in $m$
	\begin{align}
\Sigma_{m+1}(m)-\Sigma_m(m-1) &=  \sum_{k = m+2}^{2m+1}\frac{1}{k}- \sum_{k = m+1}^{2m-1}\frac{1}{k}\nonumber\\
&=\frac{1}{2m+1}+\frac{1}{2m}-\frac{1}{m+1}\nonumber\\
&=\frac{3 m+1}{4 m^3+6 m^2+2 m}>0
	\end{align}
	Therefore we can use the bound \cite{S1}
	\begin{equation}
\Sigma_m(m-1)\le\lim_{m\to\infty}\Sigma_m(m-1)=\log 2
	\end{equation}
	\subsection{Upper bound for $K_m(x)$ with $x<m$}

\begin{prop}\label{prop:K}
For all $m\ge 2$
\begin{equation}
K_m(x)\le2\log 2.
\end{equation}
\end{prop}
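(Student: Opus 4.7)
The plan is to reduce the bound on $K_m(x)$ to the bound on $\Sigma_m(x)$ already proved in Proposition \ref{prop:Sigma-main}. The crucial first step is to observe that $K_m(x)$ can be rewritten as
$$K_m(x) \;=\; 2\,\Sigma_{m-2}(x) \;-\; \frac{3x}{m(m+x)}\,,$$
since $\sum_{k=m-1}^{m-2+x} 1/k$ is precisely $\Sigma_{m-2}(x)$ and $\frac{3}{m+x}-\frac{3}{m}$ equals $-\frac{3x}{m(m+x)}$. The rational correction is manifestly non-positive for $x>0$, so the task reduces to controlling $\Sigma_{m-2}(x)$.

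For the bulk of the range, when $x \le m-2$, I would apply the same integral sandwich (Corollary \ref{cor:harmonic-blocks}) used in Proposition \ref{prop:Sigma-main}, with $m$ replaced by $m-2$: this yields
$$\Sigma_{m-2}(x) \;\le\; \log\frac{m-2+x}{m-2} \;\le\; \log 2\,,$$
and combined with the non-positive correction it gives $K_m(x) \le 2\log 2$ immediately. The only remaining endpoint is $x=m-1$, where the sandwich estimate for $\Sigma_{m-2}$ overshoots $\log 2$ by an amount of order $1/m$. Here the plan is to absorb the overshoot into the explicit negative correction $-3x/(m(m+x))$: I would verify the resulting algebraic inequality for $m$ above a small threshold and handle the finite list of leftover small values by direct computation.

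The main obstacle is precisely this endpoint case $x=m-1$: the straightforward reduction to Proposition \ref{prop:Sigma-main} no longer applies there, and the proof must quantitatively balance the logarithmic overshoot from the integral sandwich against the rational correction $-3x/(m(m+x))$. In the main range $x \le m-2$, by contrast, the bound is a one-line consequence of the rewriting displayed above together with the same elementary sandwich already exploited in Proposition \ref{prop:Sigma-main}.
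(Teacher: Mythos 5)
Your proof is, at its core, the same as the paper's: bound the block sum by the integral sandwich of Corollary \ref{cor:harmonic-blocks}, then drop the non-positive rational correction. Where you go beyond the paper is in explicitly flagging the endpoint $x=m-1$: the paper passes from $K_m(x)\le 2\log\bigl(1+\frac{x}{m-2}\bigr)+\bigl(\text{non-positive}\bigr)$ directly to $K_m(x)\le 2\log 2$, a jump that is immediate only for $x\le m-2$, and is silent about $x=m-1$.

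Be aware, though, that the endpoint is not purely a bookkeeping issue: at $m=2$ the bound is genuinely false, since $K_2(1)=2+\frac{3}{3}-\frac{3}{2}=\frac{3}{2}>2\log 2$. In the application (Theorem \ref{th:2}) this never surfaces because $m$ and $n=m+x$ are both even spins, so $x$ is even and $x<m$ forces $x\le m-2$ --- the endpoint simply cannot occur. A proof of the proposition as literally stated, however, would need to impose this restriction (or require $m\ge 3$). For $m\ge 3$ your absorption plan does work: $\Sigma_{m-2}(m-1)\le\log 2+\frac{1}{2(m-2)}$ and the correction equals $-\frac{3(m-1)}{m(2m-1)}$, so the overshoot is cancelled once $m^2-8m+6\ge 0$, i.e.\ $m\ge 8$, with $3\le m\le 7$ a quick direct check. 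The identity $\frac{3}{m+x}-\frac{3}{m}=-\frac{3x}{m(m+x)}$ is a tidy rephrasing but otherwise the substance of the argument coincides with the paper's.
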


\begin{proof}
We start from the definition of $K_m(x)$
\begin{equation}
K_m(x) =2\sum_{k = m-1}^{m-2+x}\frac{1}{k}+\frac{3}{m+x}-\frac{3}{m}
\end{equation}
then by applying  Eq. \eqref{eq:harmonic-sandwich} on the first term 
\begin{equation}
K_m(x)\le 2\log\left(1+\frac{x}{m-2}\right)+\frac{3}{m+x}-\frac{3}{m}\,,
\end{equation}
since $1\le x<m$ we can easily bound this sum as
\begin{equation}
\frac{3}{m+x}-\frac{3}{m}\le 0
\end{equation}
and so
\begin{equation}
K_m(x)\le 2\log 2
\end{equation}
thus completing the proof.
\end{proof}

\subsection{Upper bound for $J_m(x)$ for $x<m$}

\begin{prop}\label{prop:J-upper}
For all $m\ge2$ and $x<m$
\begin{equation}
J_m(x)<2\log 2.
\end{equation}
\end{prop}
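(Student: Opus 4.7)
I would adapt the strategy from Proposition \ref{prop:K}, bounding the harmonic part via the integral sandwich and treating the two alternating cubic-denominator correction terms as a small perturbation controlled by the triangle inequality. The key observation is that the corrections are of order $1/m^3$, while the headroom between $2\log(1+x/m)$ and $2\log 2$ is of order $1/m$ when $x\leq m-1$, so a crude absolute-value bound on the corrections suffices.

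First, I would apply Corollary \ref{cor:harmonic-blocks} with $a=m+1$ and $b=m+x$ to obtain
\begin{equation}
2\sum_{k=m+1}^{m+x}\frac{1}{k}\leq 2\log\frac{m+x}{m}\leq 2\log\Big(2-\frac{1}{m}\Big),
\end{equation}
where the last step uses $x\leq m-1$. Next, since each alternating rational term has absolute value at most $\frac{2}{(m+2)(m+1)m}$ (the denominator $(m+x+2)(m+x+1)(m+x)$ only gets larger with $x\geq 1$), the triangle inequality gives
\begin{equation}
\left|\frac{2(-1)^{m+x}}{(m+x+2)(m+x+1)(m+x)}-\frac{2(-1)^m}{(m+2)(m+1)m}\right|\leq\frac{4}{(m+2)(m+1)m}.
\end{equation}
Combining these bounds, one reads off $J_m(x)\leq 2\log(2-1/m)+\frac{4}{(m+2)(m+1)m}$.

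The core step is then to check that this is strictly less than $2\log 2$. Writing $2\log 2-2\log(2-1/m)=2\log\frac{2m}{2m-1}$ and applying the elementary inequality $\log(1+y)\geq y/(1+y)$ with $y=1/(2m-1)$ yields $2\log\frac{2m}{2m-1}\geq\frac{1}{m}$. Hence it suffices to verify $\frac{4}{(m+2)(m+1)m}<\frac{1}{m}$, that is, $(m+1)(m+2)>4$, which is immediate for $m\geq 2$ since $(m+1)(m+2)\geq 3\cdot 4=12$.

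The main obstacle compared to Proposition \ref{prop:K} is that the signs of the two rational correction terms depend on the parities of $m$ and $x$, so the favorable sign cancellation used there is unavailable. The resolution is precisely the order-of-magnitude observation above: because the correction is $O(1/m^3)$ and the gap between $2\log(2-1/m)$ and $2\log 2$ is $O(1/m)$, the triangle inequality already gives a uniform bound valid for every admissible $(m,x)$ with $m\geq 2$ and $1\leq x<m$ (the case $x=0$ is trivial since then the sum is empty and the two corrections cancel exactly).
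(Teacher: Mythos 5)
Your proof is correct and follows essentially the same route as the paper: bound the harmonic block by $2\log(2-1/m)\le 2\log 2 - 1/m$ via the integral sandwich, then show the $O(1/m^3)$ rational corrections cannot close that $1/m$ gap. The only cosmetic difference is that you use the triangle inequality to get the clean uniform bound $\frac{4}{(m+2)(m+1)m}$ on the correction terms (which is slightly cruder than the paper's estimate but entirely sufficient, and in fact avoids an apparent arithmetic slip in the paper's intermediate line, where the stated bound $\frac{3}{m(m^2+5m+6)}$ is actually smaller than the true worst-case sum of the two rational terms).
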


\begin{proof}
From the definition of $J_m(x)$
\begin{equation}
	J_m(x) =2\sum_{k = m+1}^{m+x}\frac{1}{k}+\frac{2(-1)^{m+x}}{(m+x+2)(m+x+1)(m+x)}-\frac{2(-1)^m}{(m+2)(m+1)m}
\end{equation}
we use as above Eq. \eqref{eq:harmonic-sandwich} on the first term 
\begin{equation}
J_m(x) \le 2\log\left(1+\frac{x}{m}\right)+\frac{2(-1)^{m+x}}{(m+x+2)(m+x+1)(m+x)}-\frac{2(-1)^m}{(m+2)(m+1)m}
\end{equation}
we use the fact that for $x<m$
\begin{equation}
\log\Bigl(1+\frac{x}{m}\Bigr)\ \le\ \log\Bigl(1+\frac{m-1}{m}\Bigr)
=\log\Bigl(2-\frac{1}{m}\Bigr)
=\log 2+\log\!\Bigl(1-\frac{1}{2m}\Bigr).
\end{equation}
Using $\log(1-u)\le -u$ for $u\in(0,1)$,
\begin{equation}\label{eq:stima2}
2\log\!\Bigl(1-\frac{x}{2m}\Bigr)\ \le\ 2\log 2-\frac{1}{m}.
\end{equation}
For the rational terms we have that in the worst case scenario both $-(-1)^m$ and $(-1)^{m+x}$ yield a positive term and so we consider
\begin{align}
J_m(x) &\le 2\log\left(1-\frac{x}{m}\right)+\frac{2}{(m+x+2)(m+x+1)(m+x)}+\frac{2}{(m+2)(m+1)m}\nonumber\\
&\le 2\log 2-\frac{1}{m}+\frac{3}{m \left(m^2+5 m+6\right)}\nonumber\\
&= 2\log 2-\frac{m^2+5 m+3}{m(m^2+5 m+6)}\nonumber\\
&\le 2\log 2
\end{align}
where in the second line we have put $x=1$ as a majorant and we have also used Eq. \eqref{eq:stima2}.
\end{proof}

\subsection{Upper bound for $L_m(x)$ with $x<m$}

\begin{prop}\label{prop:L}
For all $m\ge 3$ and $x<m$
\begin{equation}\label{eq:Lm-upper-sharp}
L_m(x)\le2\log 2+\frac{3}{2}.
\end{equation}
\end{prop}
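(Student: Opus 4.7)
The plan is to follow the blueprint of Propositions \ref{prop:Sigma-main}, \ref{prop:K}, and \ref{prop:J-upper}: bound the harmonic block of $L_m(x)$ via the integral sandwich of Corollary \ref{cor:harmonic-blocks}, dominate the alternating rational tail by its worst-case positive value, and use the constraint $x<m$ together with the sharp bound $\log(1+u)\le u$ to close the estimate. Because the stated bound $2\log 2+\frac{3}{2}$ turns out to be essentially saturated at $m=3$, no intermediate inequality can afford slack for small $m$.

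First I would decompose $L_m(x)=2\,S_m(x)+R_m(x)$ with
\begin{equation}
S_m(x)=\sum_{k=m-1}^{m+x-2}\frac{1}{k},\qquad R_m(x)=\frac{2(-1)^{m+x-1}}{(m+x)(m+x-1)}-\frac{2(-1)^{m-1}}{m(m-1)}\,.
\end{equation}
Applying Corollary \ref{cor:harmonic-blocks} with $a=m-1$ and $b=m+x-2$ (legitimate precisely when $m\ge 3$, matching the hypothesis) gives $S_m(x)\le\log\frac{m+x-2}{m-2}$. The constraint $x<m$ yields $\frac{m+x-2}{m-2}\le\frac{2m-3}{m-2}=2\bigl(1+\frac{1}{2(m-2)}\bigr)$, and then $\log(1+u)\le u$ produces $2\,S_m(x)\le 2\log 2+\frac{1}{m-2}$. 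For the rational tail I would discard signs and take the worst case in which both terms contribute positively:
\begin{equation}
R_m(x)\ \le\ \frac{2}{(m+x)(m+x-1)}+\frac{2}{m(m-1)}\ \le\ \frac{2}{m(m+1)}+\frac{2}{m(m-1)}\,,
\end{equation}
where $x\ge 1$ is used in the second step.

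Combining the two bounds yields $L_m(x)\le 2\log 2+F(m)$ with $F(m)=\frac{1}{m-2}+\frac{2}{m(m-1)}+\frac{2}{m(m+1)}$. Since each summand of $F(m)$ is manifestly decreasing in $m$ for $m\ge 3$, the maximum is attained at $m=3$, where a direct computation gives $F(3)=1+\frac{1}{3}+\frac{1}{6}=\frac{3}{2}$, which completes the proof.

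The main obstacle is precisely this tightness at $m=3$: the crude estimate $\log\frac{m+x-2}{m-2}\le\log 2$ fails there (the argument reaches $3$), so the decomposition $\log\frac{2m-3}{m-2}=\log 2+\log\bigl(1+\frac{1}{2(m-2)}\bigr)$ together with the linear bound $\log(1+u)\le u$ is essential. It is exactly the remainder $\frac{1}{m-2}$ from this step that, evaluated at $m=3$, supplies the $1$ needed to bring $F(3)$ up to $\frac{3}{2}$.
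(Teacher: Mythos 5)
Your proof is correct and follows essentially the same route as the paper: bound the harmonic block by the integral sandwich, split $\log\frac{2m-3}{m-2}$ as $\log 2+\log\bigl(1+\frac{1}{2(m-2)}\bigr)$ and apply $\log(1+u)\le u$, dominate the alternating rational tail by discarding signs and setting $x=1$, and conclude with a bound that is sharp at $m=3$. The only difference is cosmetic: where the paper collapses the residual to the single rational expression $\frac{m(m+4)-9}{(m-2)(m^2-1)}$ before asserting it is $\le\frac32$, you instead note that $F(m)=\frac{1}{m-2}+\frac{2}{m(m-1)}+\frac{2}{m(m+1)}$ is a sum of manifestly decreasing terms and evaluate $F(3)=\frac32$, which is a slightly more transparent justification of the same final step.
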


\begin{proof}
From the definition of $L_m(x)$
\begin{equation}
L_m(x)=2\sum_{k = m-1}^{m+x-2}\frac{1}{k}+\frac{2(-1)^{m+x-1}}{(m+x)(m+x-1)}-\frac{2(-1)^{m-1}}{m(m-1)}
\end{equation}
then by applying  Eq. \eqref{eq:harmonic-sandwich} on the first term 
\begin{equation}
L_m(x)\le 2\log\left(1+\frac{x}{m-2}\right)+\frac{2(-1)^{m+x-1}}{(m+x)(m+x-1)}-\frac{2(-1)^{m-1}}{m(m-1)}
\end{equation}
we use the fact that for $x<m$
\begin{equation}
\log\!\Bigl(1+\frac{x}{m-2}\Bigr)\ \le\ \log\!\Bigl(1+\frac{m-1}{m-2}\Bigr)
=\log\!\Bigl(2+\frac{1}{m-2}\Bigr)
=\log 2+\log\!\Bigl(1+\frac{1}{2(m-2)}\Bigr).
\end{equation}
Using $\log(1+u)\le u$ for $u\in(0,1)$,
\begin{equation}\label{eq:stima2}
2\log\left(1+\frac{x}{m-2}\right)\ \le\ 2\log 2+\frac{1}{m-2}.
\end{equation}
For the rational terms we have that in the worst case scenario both $-(-1)^{m-1}$ and $(-1)^{m+x-1}$ yield a positive term and so we consider
\begin{align}
L_m(x)&\le 2\log 2+\frac{1}{m-2}+\frac{2}{(m+x)(m+x-1)}+\frac{2}{m(m-1)}\nonumber\\
&\le 2\log 2+\frac{1}{m-2}+\frac{2}{(m+1)m}+\frac{2}{m(m-1)}\nonumber\\
&= 2\log 2+\frac{m (m+4)-9}{(m-2) \left(m^2-1\right)}\nonumber\\
&\le 2\log 2+\frac{3}{2}
\end{align}
\end{proof}

	\subsection{ Upper bound for $U_m(x)$ for $x<m$}

	\begin{prop}\label{prop:U}
	For all $m\ge2$ and $x<m$
	\begin{equation}
U_m(x)<2\log 2.
	\end{equation}
	\end{prop}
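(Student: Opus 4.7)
The plan is to mirror the strategy used in Proposition~\ref{prop:J-upper}, which handled a very similar structure: a doubled harmonic block plus two small rational corrections of opposite signs. The goal is to bound the worst-case configuration where both sign choices in $U_m(x)$ conspire to make the rational corrections positive, and then to show that the logarithmic bound on the harmonic block leaves enough slack to absorb them.

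First, I would invoke Corollary~\ref{cor:harmonic-blocks} on the range $a=m+2$, $b=m+x+1$ to obtain
\begin{equation}
2\sum_{k=m+2}^{m+x+1}\frac{1}{k}\ \le\ 2\log\frac{m+x+1}{m+1}\ =\ 2\log\!\left(1+\frac{x}{m+1}\right).
\end{equation}
Since $x<m$ implies $\frac{x}{m+1}\le \frac{m-1}{m+1}=1-\frac{2}{m+1}$, the argument of the logarithm is at most $2-\frac{2}{m+1}$. Then using $\log(1-u)\le -u$ with $u=\frac{1}{m+1}$ yields the sharper estimate
\begin{equation}
2\sum_{k=m+2}^{m+x+1}\frac{1}{k}\ \le\ 2\log 2-\frac{2}{m+1},
\end{equation}
which is the key slack in the argument.

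Next, I would bound the two rational terms in the worst case, replacing the signed terms by their absolute values. Since $x\ge 1$, the term $\frac{2}{(m+x+2)(m+x+1)}$ is maximized at $x=1$ (giving $\frac{2}{(m+3)(m+2)}$), while the second contributes $\frac{2}{(m+2)(m+1)}$. Combining,
\begin{equation}
U_m(x)\ \le\ 2\log 2-\frac{2}{m+1}+\frac{2}{(m+3)(m+2)}+\frac{2}{(m+2)(m+1)}.
\end{equation}
A short telescoping calculation collapses the correction: $\frac{2}{(m+2)(m+1)}-\frac{2}{m+1}=-\frac{2}{m+2}$, and then $\frac{2}{(m+3)(m+2)}-\frac{2}{m+2}=-\frac{2}{m+3}$, so the whole bracket equals $-\frac{2}{m+3}<0$ for every $m\ge 2$. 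Hence $U_m(x)<2\log 2$, proving the proposition.

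The only subtle point, and the main thing to verify, is that the $O(1/m)$ slack produced by the logarithmic bound really dominates the $O(1/m^2)$ rational terms for \emph{every} admissible $m\ge 2$, not just asymptotically; the telescoping identity above does this uniformly. A cosmetic alternative, analogous to the subsubsection following Proposition~\ref{prop:Sigma-main}, would be to show monotonicity of $U_m(m-1)$ in $m$ and take the limit, but the direct sandwich argument is shorter and avoids a separate monotonicity check.
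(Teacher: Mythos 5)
Your proof is correct and follows essentially the same route as the paper: bound the doubled harmonic block via Corollary~\ref{cor:harmonic-blocks} to get $2\log 2$ minus an $O(1/m)$ slack, then bound the two rational corrections by their worst-case ($x=1$) values and absorb them. Your version is a touch cleaner and slightly sharper — you retain the slack $-2/(m+1)$ directly rather than loosening $\tfrac{m-1}{m+1}$ to $\tfrac{m-1}{m}$, and your telescoping identity $\tfrac{2}{(m+2)(m+1)}-\tfrac{2}{m+1}=-\tfrac{2}{m+2}$ followed by $\tfrac{2}{(m+3)(m+2)}-\tfrac{2}{m+2}=-\tfrac{2}{m+3}$ closes the estimate more transparently (and incidentally avoids a small arithmetic slip in the paper, where the two rational terms combine to $4/(m^2+4m+3)$ rather than the stated $2/(m^2+4m+3)$, though the paper's final inequality still holds).
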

	\begin{proof}
	From the definition of $U_m(x)$
	\begin{equation}
	U_m(x) =2\sum_{k = m+2}^{m+x+1}\frac{1}{k}-\frac{2(-1)^{m+x}}{(m+x+2)(m+x+1)} + \frac{2(-1)^m}{(m+2)(m+1)}
	\end{equation}
	we use as above Eq. \eqref{eq:harmonic-sandwich} on the first term 
	\begin{equation}
	U_m(x) \le 2\log\left(1+\frac{x}{m+1}\right)-\frac{2(-1)^{m+x}}{(m+x+2)(m+x+1)} + \frac{2(-1)^m}{(m+2)(m+1)}
	\end{equation}
	we use the fact that for $x<m$
	\begin{equation}
	\log\!\Bigl(1+\frac{x}{m+1}\Bigr)\ \le\ \log\!\Bigl(1+\frac{m-1}{m+1}\Bigr) \le\ \log\!\Bigl(1+\frac{m-1}{m}\Bigr)
	=\log\!\Bigl(2-\frac{1}{m}\Bigr)
	=\log 2+\log\!\Bigl(1-\frac{1}{2m}\Bigr).
	\end{equation}
	Using $\log(1-u)\le -u$ for $u\in(0,1)$,
	\begin{equation}\label{eq:stima2}
	2\log\!\Bigl(1-\frac{x}{m+1}\Bigr)\ \le\ 2\log 2-\frac{1}{m}.
	\end{equation}
	For the rational terms we have that in the worst case scenario both $(-1)^m$ and $(-1)^{m+x}$ yield a positive term and so we consider
	\begin{align}
	U_m(x) &\le 2\log\left(1+\frac{x}{m+1}\right)+\frac{2}{(m+x+2)(m+x+1)} + \frac{2}{(m+2)(m+1)}\nonumber\\
	&\le 2\log 2-\frac{1}{m}+\frac{2}{m^2+4 m+3}\nonumber\\
	&= 2\log 2-\frac{m^2+2 m+3}{m^3+4 m^2+3 m}\nonumber\\
	&\le 2\log 2
	\end{align}
	where in the second line we have put $x=1$ as a majorant and we have also used Eq. \eqref{eq:stima2}.
	\end{proof}

	\bibliographystyle{JHEP}
	\bibliography{mybib} 
	
\end{document}